\renewcommand{\eqref}[1]{Eq.~\ref{#1}}
\newcommand{\figureref}[1]{Fig. ~\ref{#1}}
\newcommand{\tr}{{\mathrm{Tr}}}
\newcommand{\dd}{{\mathrm{d}}}
\newcommand{\centered}[1]{\begin{tabular}{l} #1 \end{tabular}}
\newtheorem{theorem}{Theorem}
\newtheorem{definition}{Definition}
\newtheorem{Lemma}{Lemma}
\begin{document}
\title{Open quantum systems integrable by partial commutativity}
\author{Artur Czerwinski}
\email{aczerwin@umk.pl}
\affiliation{Institute of Physics, Faculty of Physics, Astronomy and Informatics, Nicolaus Copernicus University, Grudziadzka 5, 87--100 Torun, Poland}

\begin{abstract}
The article provides a framework to solve linear differential equations based on partial commutativity which is introduced by means of the Fedorov theorem. The framework is applied to specific types of three-level and four-level quantum systems. The efficiency of the method is evaluated and discussed. The Fedorov theorem appears to answer the need for methods which allow to study dynamical maps corresponding with time-dependent generators. By applying this method, one can investigate countless examples of dissipative systems such that the relaxation rates depend on time.
\end{abstract}

\maketitle

\section{Introduction}

The problem of solving a differential equation belongs to most fundamental issues in the theory of open quantum systems. The ability to obtain a solution in the closed form means that one can determine the trajectory of the system, which provides complete characterization of how the quantum state changes in time. However, only particular types of differential equations allow solutions in closed forms. Additionally, a universal criterion for integrability does not exist. Therefore, there is a need for new methods which can be applied to investigate different types of equations. In this article we propose to implement the Fedorov theorem in the theory of open quantum systems.

The simplest dynamical map, which does not need any further comment at this point, can be obtained when the time-evolution is given by a master equation with the GKSL generator $\mathbb{L}: \mathbb{M}_N (\mathbb{C}) \rightarrow \mathbb{M}_N (\mathbb{C})$, where we assume that the space is finite-dimensional  \cite{Gorini1976,Lindblad1976,Manzano2020}. In such a case, the density matrix at any time instant can be computed by the semigroup:
\begin{equation}\label{eq34}
\rho (t) = \exp \left(\mathbb{L} t\right)\, [\rho(0)],
\end{equation}
where $\rho(0)$ stands for the initial density matrix. A master equation governed by the GKSL generator is the most general type of Markovian and time-homogeneous evolution which preserves trace and positivity.

The closed-form solution of a master equation can be obtained straightforwardly as long as the generator is time-independent. The problem appears when the dynamics is governed by a master equation with a time-dependent linear generator:
\begin{equation}\label{eq35}
\frac{\dd \rho(t)}{\dd t} = \mathbb{L}(t)\, [\rho(t)],
\end{equation}
where the generator $\mathbb{L}(t)$ is defined on a time interval $\mathcal{I}$.

In 1949, Dyson published an article \cite{Dyson1949}, in which he presented the formal solution of an explicitly time-dependent Schr\"{o}dinger equation. The result was obtained by iteration and a time ordering operator, which was later called after the author "Dyson series". Thus, the formal solution of \eqref{eq35} can be written by means of a superoperator $\Phi(t)$:
\begin{equation}\label{eq26}
\rho (t) = \Phi(t)\, [\rho(0)] = \mathrm{T} \:\exp\left( \int_{0}^t \mathbb{L}(\tau) d \tau\right) [\rho(0)],
\end{equation}
where $\mathrm{T}$ denotes the chronological product. The formula for the map $\Phi(t)$ can be expanded by applying the Dyson series \cite{Dyson1949}:
\begin{equation}\label{eq27}
\Phi(t) = \mathbb{1}_N + \int_{0}^t \dd t_1 \mathbb{L}(t_1) + \int_{0}^t \dd t_1 \int_{0}^{t_1} d t_2 \mathbb{L}(t_1) \mathbb{L}(t_2)+ \dots,
\end{equation}
provided it converges. One fundamental problem studied in the theory of open quantum systems relates to algebraic properties of $\mathbb{L}(t)$ which guarantee that the solution $\Phi(t)$ constitutes a legitimate dynamical map, e.g. \cite{Alicki2007}. Undoubtedly, such a question is relevant, but in this article we focus on the methods which provide solutions to time-dependent master equations of the form \eqref{eq35} without the necessity to utilize the infinite Dyson series.

In Sec. \ref{section1}, we revise the definitions and theorems connected with functionally commutative generators. Then, in Sec. \ref{section2}, we present the Fedorov theorem, which can be understood as a generalization of the Lappo-Danilevsky criterion. Along with the theorem we propose a feasible framework for its application in concrete examples. Then, in Sec. \ref{section3}, the framework is tested as we apply the Fedorov theorem to three-level and four-level open quantum systems with evolution governed by time-local generators. We study three particular types of three-level dynamics: $V-$system, \textit{cascade} and \textit{Lambda}, as well as one example on four-level \textit{cascade} systems, in order to prove that this technique can facilitate solving master equations with time-dependent generators.

\section{Functional and integral commutativity}\label{section1}

In order to follow the trajectory of the systems, it is desirable to be able to write the solution of \eqref{eq35} in the closed form:
\begin{equation}\label{eq36}
\rho(t) = \exp \left( \int_0^t \mathbb{L}(\tau) d \tau \right) [\rho(0)],
\end{equation}
which can be done only for specific generators $\mathbb{L}(t)$.

First, we shall analyze the sufficient conditions which, if satisfied by the generator $\mathbb{L}(t)$, guarantee that the solution can be written in the closed form. We shall refer to algebraic properties of the matrix representation of the generator $\mathbb{L}(t)$.

To begin with, let us recall a definition, assuming that $\textbf{F}(t)$ stands for a matrix function and $\mathcal{I}$ denotes an interval within its domain.
\begin{definition}[Semiproper matrix function]
A matrix function $\textbf{F}: \mathcal{I} \rightarrow \mathbb{C}^{n\times n}$ is called \textit{semiproper} on $\mathcal{I}$ if
\begin{equation}\label{eq37}
\textbf{F}(t) \textbf{F}(\tau) = \textbf{F}(\tau)\textbf{F}(t) \;\; \forall \:t, \tau \in \mathcal{I}.
\end{equation}
\end{definition}

The definition of the semiproper function can be applied to time-dependent generators of evolution, which are a specific kind of complex-valued time-dependent function matrices. In other words, this property is called functional commutativity.

\begin{definition}[Functional commutativity]
A time-dependent generator $\mathbb{L}(t)$ is functionally commutative (i.e. semiproper) iff
\begin{equation}\label{eq39}
[\mathbb{L}(t), \mathbb{L}(s) ] = 0 \;\; \forall\: t,s \in \mathcal{I}.
\end{equation}
\end{definition}

The notion of functional commutativity applied to generators of evolution allows one to formulate a theorem concerning the solvability of the dynamics \eqref{eq35} \cite{Erugin1966,Lukes1982}.

\begin{theorem}\label{solv1}
If the generator of evolution $\mathbb{L}(t)$ satisfies the condition of functional commutativity \eqref{eq39}, then the solution of \eqref{eq35} can be written in the closed form according to \eqref{eq36}.
\end{theorem}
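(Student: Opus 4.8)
The plan is to show that when $\mathbb{L}(t)$ is functionally commutative, the superoperator $\Phi(t)$ defined via the time-ordered exponential in \eqref{eq26} collapses to the ordinary exponential in \eqref{eq36}. I would work at the level of the matrix representation of the generator, so that $\mathbb{L}(t) \in \mathbb{C}^{n \times n}$ is a semiproper matrix function in the sense of \eqref{eq37}. The key idea is to verify that the candidate solution $\Phi(t) = \exp\left(\int_0^t \mathbb{L}(\tau)\, d\tau\right)$ actually satisfies the differential equation \eqref{eq35} with the correct initial condition $\Phi(0) = \mathbb{1}_n$; uniqueness of solutions to the linear initial value problem then forces this to be \emph{the} solution.

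First I would set $A(t) := \int_0^t \mathbb{L}(\tau)\, d\tau$ and differentiate $\exp(A(t))$ with respect to $t$. The obstacle is that, for a general matrix function, $\frac{d}{dt}\exp(A(t)) \neq A'(t)\exp(A(t))$, because $A(t)$ need not commute with its own derivative $A'(t) = \mathbb{L}(t)$. This is precisely where functional commutativity enters: since $[\mathbb{L}(t), \mathbb{L}(s)] = 0$ for all $t, s \in \mathcal{I}$, it follows immediately that $[A(t), \mathbb{L}(t)] = \int_0^t [\mathbb{L}(\tau), \mathbb{L}(t)]\, d\tau = 0$, so $A(t)$ commutes with $A'(t)$. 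Under this commutativity, the naive product rule for the matrix exponential is valid, and I would establish $\frac{d}{dt}\exp(A(t)) = \mathbb{L}(t)\exp(A(t))$ either by differentiating the power series $\sum_k A(t)^k / k!$ term by term (each term contributes $k \, A(t)^{k-1} A'(t)/k!$ once the factors may be freely permuted) or by invoking the standard formula $\frac{d}{dt}\exp(A) = \int_0^1 e^{sA} A' e^{(1-s)A}\, ds$ and collapsing the integrand using $[A, A'] = 0$.

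The main obstacle, and the crux of the argument, is therefore justifying the collapse of the matrix-exponential derivative, since this is the only place the hypothesis is used and it is the step that fails in the noncommutative case. Once that identity is in hand, the remainder is routine: $\Phi(t) = \exp(A(t))$ satisfies $\Phi'(t) = \mathbb{L}(t)\Phi(t)$ and $\Phi(0) = \exp(0) = \mathbb{1}_n$, which is exactly the operator form of \eqref{eq35}. Because \eqref{eq35} is a linear system with continuous coefficients, the Picard--Lindel\"of theorem guarantees a unique solution for the given initial data, so the closed-form expression \eqref{eq36} must coincide with the true dynamical map, completing the proof. I would close by remarking that no appeal to the full Dyson series \eqref{eq27} is needed, which is the practical payoff of the commutativity assumption.
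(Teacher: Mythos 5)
Your proof is correct. Since the paper itself does not prove Theorem \ref{solv1} at all---it is imported from the literature (Erugin, Lukes)---the natural internal comparison is with the paper's proof of the Fedorov theorem, and your argument follows the same skeleton: expand the exponential of $A(t) = \int_0^t \mathbb{L}(\tau)\,\dd\tau$ as a power series, differentiate term by term, and use commutativity to collapse the sum $\sum_{k=1}^{m} A^{k-1}A'A^{m-k}$ into $m\,A^{m-1}A'$, exactly as in \eqref{eq44}--\eqref{eq46} but with the commutation relation holding as a full matrix identity rather than only on a distinguished vector $\alpha$. The genuinely different (and tidy) organizational choice is that you first derive integral commutativity from functional commutativity, $[A(t),\mathbb{L}(t)] = \int_0^t[\mathbb{L}(\tau),\mathbb{L}(t)]\,\dd\tau = 0$, and then only ever use the latter; this shows that Theorem \ref{solv1} reduces to Theorem \ref{solv2} (the Lappo-Danilevsky criterion), so a single argument covers both, whereas the paper presents them as parallel cited results. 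You also add an explicit uniqueness step, which the paper's Fedorov proof omits even in its own setting; strictly speaking this is needed to conclude that the exponential is \emph{the} solution rather than merely \emph{a} solution, so its inclusion is a genuine improvement rather than padding. Two minor points to tighten: term-by-term differentiation of the series should be justified by local uniform convergence (standard on compact subintervals of $\mathcal{I}$, since $\|A(t)\|$ is locally bounded there), and because the paper admits piecewise continuous coefficients (cf. the decomposition \eqref{martin}), the uniqueness claim should be phrased in the Carath\'eodory sense, with \eqref{eq35} holding at points of continuity (equivalently almost everywhere), rather than via the classical Picard--Lindel\"of statement for continuous coefficients.
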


The idea of semiproper matrix functions has received much attention in the second half of the XX century. One noteworthy article was written by J. Martin in 1967 \cite{Martin1967}. In one of the theorems, the author proved that the family of semiproper matrix functions can be completely characterized as commutative algebras generated by a basis of pairwise commutative constant matrices. Based on this result, we can say that $\mathbb{L}(t)$ is functionally commutative on $\mathcal{I}$ iff there exists a set of mutually commuting time-independent matrices $\{\mathbb{L}^{(k)}\}$ and piecewise continuous scalar functions $\{\alpha_k (t)\}$ such that
\begin{equation}\label{martin}
\mathbb{L}(t) = \sum_{k} \alpha_k (t) \mathbb{L}^{(k)}.
\end{equation}

The decomposition of the generator of evolution \eqref{martin} not only allows one to write the closed-form solution of \eqref{eq35}, but also simplifies the computing of the integral over time. However, finding such a decomposition of $\mathbb{L}(t)$ remains a challenge \cite{Zhu1990}. For this reason, J. Zhu proposed a different method to decompose a functionally commutative generator (called the spatial decomposition) \cite{Zhu1992}, which was later developed by T. Kamizawa and applied to open quantum systems \cite{Kamizawa2015}.

Another approach to the problem of solving the evolution equation of the form \eqref{eq35} is to apply to the notion of commutativity with the integral. It is another condition which, if satisfied by the the generator $\mathbb{L}(t)$, implies that the solution of the evolution equation is given in the closed form. Let us recall the definition.

\begin{definition}[Integral commutativity]
A time-dependent generator $\mathbb{L}(t)$ is said to commute with its integral iff:
\begin{equation}\label{eq40}
\mathbb{L}(t) \int \mathbb{L}(t) d t = \int \mathbb{L}(t) d t\; \mathbb{L}(t) \: \iff \: [ \mathbb{L}(t),  \int \mathbb{L}(t) d t] = 0.
\end{equation}
\end{definition}

A thorough study of time-dependent matrices which commute with their integrals was published by Bogdanov and Cheboratev in 1959 \cite{Bogdanov1959}. Necessary and sufficient conditions for integral commutativity can be given in relation to the properties of the Jordan canonical form of $\mathbb{L}(t)$ \cite{Epstein1963,Evard1990}. Based on the notion of integral commutativity, one can formulate a theorem concerning the solvability of evolution equations \cite{Lappo1957}.

\begin{theorem}\label{solv2}
If the generator of evolution $\mathbb{L}(t)$ satisfies the condition of integral commutativity \eqref{eq40}, then the fundamental solution of \eqref{eq35} has the closed form \eqref{eq36}.
\end{theorem}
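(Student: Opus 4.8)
The plan is to exhibit \eqref{eq36} as an explicit candidate for the fundamental solution, verify directly that it satisfies the master equation \eqref{eq35} together with the initial condition, and then close the argument by invoking uniqueness for linear matrix differential equations. To this end I would first set $\Omega(t) := \int_0^t \mathbb{L}(\tau)\,d\tau$, the antiderivative normalized so that $\Omega(0) = 0$; by the fundamental theorem of calculus $\Omega'(t) = \mathbb{L}(t)$. The integral commutativity hypothesis \eqref{eq40} then reads $[\Omega'(t), \Omega(t)] = 0$ for every $t \in \mathcal{I}$ (the indefinite integral and $\Omega(t)$ differ only by an additive constant matrix, which is irrelevant to a commutator that is required to vanish at each fixed $t$). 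The candidate fundamental solution is the superoperator $\Phi(t) = \exp(\Omega(t))$ already introduced in \eqref{eq26}.

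The heart of the argument is the differentiation of $\exp(\Omega(t))$. In general the derivative of a matrix exponential is given by the integral representation
\begin{equation}
\frac{d}{dt}\exp(\Omega(t)) = \int_0^1 e^{s\Omega(t)}\,\Omega'(t)\,e^{(1-s)\Omega(t)}\,ds,
\end{equation}
which does \emph{not} collapse to $\Omega'(t)\,e^{\Omega(t)}$ in general. Here the hypothesis does exactly the required work: since $\Omega'(t)=\mathbb{L}(t)$ commutes with $\Omega(t)$, it commutes with every power of $\Omega(t)$ and hence with $e^{s\Omega(t)}$ for all $s$. I would therefore pull $\Omega'(t)$ out through the integrand, recombine the exponentials via $e^{s\Omega(t)}e^{(1-s)\Omega(t)}=e^{\Omega(t)}$, and arrive at $\frac{d}{dt}\Phi(t) = \mathbb{L}(t)\,\Phi(t)$. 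Together with $\Phi(0)=\exp(0)=\mathbb{1}_N$, this shows that $\Phi(t)=\exp(\Omega(t))$ is the principal fundamental solution and that $\rho(t)=\Phi(t)[\rho(0)]$ solves \eqref{eq35}.

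The main obstacle I anticipate lies in the rigorous justification of this commutation-through-the-integral step rather than in the surrounding bookkeeping: one must notice that commutativity is assumed only between $\mathbb{L}(t)$ and $\Omega(t)$ at the \emph{same} instant $t$, which is strictly weaker than the functional commutativity $[\mathbb{L}(t),\mathbb{L}(s)]=0$ underlying Theorem \ref{solv1}, yet it is precisely what the derivative formula demands, so no stronger assumption is introduced. Finally, since \eqref{eq35} is a linear (super)operator differential equation with, say, piecewise-continuous coefficient $\mathbb{L}(t)$, the existence–uniqueness theorem guarantees that a solution matching the initial data is unique; the verified candidate is therefore the fundamental solution, which is exactly the assertion of the theorem.
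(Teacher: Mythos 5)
Your argument is correct in substance, but it follows a genuinely different route from the paper. The paper never proves Theorem~\ref{solv2} directly (it is imported from Lappo-Danilevsky); the closest proof it contains is that of the Fedorov theorem, which generalizes Theorem~\ref{solv2}: when the Fedorov condition $[\mathbb{L}(t),B^{n}(t)]\,\alpha=0$ holds for \emph{every} vector $\alpha$, it is exactly integral commutativity. That proof expands $\exp[B(t)]$ as a power series, differentiates term by term to obtain $\sum_{m\geq 1}\frac{1}{m!}\sum_{k=1}^{m}B^{k-1}(t)\,\mathbb{L}(t)\,B^{m-k}(t)$, and uses the commutation hypothesis to collapse each inner sum to $m\,\mathbb{L}(t)\,B^{m-1}(t)$, resumming to $\mathbb{L}(t)\exp[B(t)]$. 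You instead differentiate $\exp(\Omega(t))$ via the Duhamel integral representation $\int_{0}^{1}e^{s\Omega(t)}\,\Omega'(t)\,e^{(1-s)\Omega(t)}\,ds$, pull $\Omega'(t)=\mathbb{L}(t)$ through the integrand by commutativity, and close with the uniqueness theorem for linear systems. Your route avoids all manipulation of infinite series (convergence and term-by-term differentiation are packaged once inside the Duhamel formula) and makes explicit the uniqueness step that the paper leaves implicit. What the paper's series route buys is robustness under the \emph{weaker}, vector-dependent Fedorov hypothesis: there, $\mathbb{L}(t)$ does not commute with $e^{sB(t)}$ as an operator but only against the fixed vector $\alpha$, so your Duhamel collapse is unavailable, while the term-by-term rearrangement still works.

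One caveat deserves flagging: your parenthetical claim that the additive constant in the indefinite integral ``is irrelevant to the commutator'' is false as stated. For a constant matrix $C$, $[\mathbb{L}(t),C]$ need not vanish, so commutation of $\mathbb{L}(t)$ with one antiderivative does not imply commutation with another. The reading that makes your proof sound -- and the one consistent with the paper's own normalization $B(t)=\int_{0}^{t}\mathbb{L}(\tau)\,d\tau$ in the Fedorov theorem -- is that \eqref{eq40} refers to the antiderivative vanishing at $t=0$, i.e.\ precisely your $\Omega(t)$. Had the hypothesis been granted only for some un-normalized antiderivative $A(t)=\Omega(t)+C$, the fundamental solution would instead be $e^{A(t)}e^{-A(0)}$, which coincides with $e^{\Omega(t)}$ only if $A(t)$ and $A(0)$ commute; so the normalization is doing real work and should be stated, not dismissed.
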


It is worth noting that $\mathbb{L}(t)$ is said to be analytic in a neighborhood of $t=t_0$ when each element of $\mathbb{L}(t)$ (and thus $\mathbb{L}(t)$ itself) can be represented as a Taylor series centered at $t_0$ which converges in some neighborhood of $t_0$. If the time-dependent generator $\mathbb{L}(t)$ is an analytic complex valued matrix function, then $\mathbb{L}(t)$ satisfies the condition of functional commutativity if and only if it commutes with its integral, which means that in such a case both criteria are compatible \cite{Goff1981}.

The theorem \ref{solv2} could be equivalently formulated in terms of the generator which commutes with its derivative, which a common way to express and study this criterion, e.g. \cite{Evard1985,Turcotte2002,Maouche2020}. Nonetheless, for the sake of the content of this article, we stay with the notion of integral commutativity, originally introduced by Lappo-Danilevsky, which is a starting point for further analysis.

\section{Partial commutativity and a framework for its application}\label{section2}

Either functional or integral commutativity is sufficient to write the solution of \eqref{eq35} in the closed form according to \eqref{eq36}. However, these conditions are not necessary. It may happen that a generator of evolution satisfies neither of the two conditions, but one is still able to write the solution of the dynamics equation in the closed form. To be more specific, in this article we shall investigate the Fedorov theorem, which demonstrates that a closed-form solution can be obtained under the condition of partial commutativity \cite{Fedorov1960} (for English refer to pp. 39--44 in \cite{Erugin1966}).

First, one should be reminded that every time-dependent linear generator $\mathbb{L}(t)$ can always be represented as a matrix, which makes it possible to study the algebraic properties of the generator. On the other hand, the evolution equation given by \eqref{eq35} can always be transformed into a differential equation with the generator $\mathbb{L}(t)$ in its matrix form multiplying the vectorized density matrix $\mathrm{vec} \{ \rho(t)\}$, which is simpler from the computational point of view. The operator $\mathrm{vec} \{ \rho(t)\}$ should be understood as a vector constructed by stacking the columns of $\rho(t)$ one underneath the other and such operation shall be referred to as the "$\mathrm{vec}$ operator". Thus, let us consider the master equation in the vectorized form, i.e.:
\begin{equation}\label{eq41a}
\mathrm{vec}\{ \dot{\rho}(t) \} = \mathbb{L}(t) \; \mathrm{vec}\{ \rho(t) \}
\end{equation}
and for such dynamics we shall formulate the Fedorov theorem \cite{Fedorov1960}.

\begin{theorem}[Fedorov theorem]
If the generator of evolution $\mathbb{L}(t)$ (its matrix representation) satisfies the condition:
\begin{equation}\label{eq41}
[\,\mathbb{L}(t), B^n (t)\,]\: \alpha = 0 \hspace{0.9cm} \forall \:n=1,2, 3, \dots,
\end{equation}
where $B(t) = \int_0^t \mathbb{L}(\tau) d \tau$ and $\alpha$ is a constant vector, then the solution of \eqref{eq41a} can be written in the closed form:
\begin{equation}\label{eq42}
\mathrm{vec} \{ \rho(t)\} = \exp[B(t)] \,\alpha
\end{equation}
\end{theorem}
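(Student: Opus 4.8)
The plan is to verify directly that the candidate $x(t) := \exp[B(t)]\,\alpha$ solves the vectorized equation \eqref{eq41a} with the correct initial data, after which uniqueness of solutions of the linear system $\dot{x} = \mathbb{L}(t)\,x$ closes the argument. The initial condition is immediate: since $B(0) = \int_0^0 \mathbb{L}(\tau)\,d\tau = 0$ we have $\exp[B(0)] = \mathbb{1}_N$, so $x(0) = \alpha$, which we identify with $\mathrm{vec}\{\rho(0)\}$. The whole content therefore lies in establishing $\dot{x}(t) = \mathbb{L}(t)\,x(t)$.

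First I would expand the exponential as a power series, $\exp[B(t)] = \sum_{m=0}^{\infty} B^m(t)/m!$, and differentiate term by term (justified on any compact subinterval of $\mathcal{I}$ by the absolute and uniform convergence of the series in finite dimension). Using the Leibniz rule for the derivative of a non-commuting product together with $\dot{B}(t) = \mathbb{L}(t)$ (the fundamental theorem of calculus), I obtain
\begin{equation}\label{eq:fed1}
\frac{d}{dt}\exp[B]\,\alpha = \sum_{m=1}^{\infty}\frac{1}{m!}\sum_{j=0}^{m-1} B^{j}\,\mathbb{L}\,B^{m-1-j}\,\alpha .
\end{equation}
The essential point is that this does \emph{not} collapse to $\dot{B}\exp[B]\,\alpha$ in general, precisely because $B$ and $\dot{B} = \mathbb{L}$ need not commute as operators.

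This is where the hypothesis \eqref{eq41} enters, and it is the crux of the proof. The condition $[\mathbb{L}, B^{n}]\,\alpha = 0$ reads $\mathbb{L}B^{n}\alpha = B^{n}\mathbb{L}\alpha$ for every $n$ (trivially so for $n=0$). Applying it to the rightmost block of each summand in \eqref{eq:fed1} converts $\mathbb{L}B^{m-1-j}\alpha$ into $B^{m-1-j}\mathbb{L}\alpha$; since all remaining factors are now powers of the single matrix $B$, which commute among themselves, each summand becomes $B^{m-1}\mathbb{L}\alpha$, independently of $j$. The inner sum then merely counts its $m$ identical terms, giving
\begin{equation}\label{eq:fed2}
\frac{d}{dt}\exp[B]\,\alpha = \sum_{m=1}^{\infty}\frac{1}{(m-1)!}\,B^{m-1}\mathbb{L}\,\alpha = \exp[B]\,\mathbb{L}\,\alpha .
\end{equation}
A final use of the same hypothesis, now in the form $\mathbb{L}B^{k}\alpha = B^{k}\mathbb{L}\alpha$ summed against $1/k!$, yields $\exp[B]\,\mathbb{L}\,\alpha = \mathbb{L}\,\exp[B]\,\alpha = \mathbb{L}\,x(t)$, so $x(t)$ indeed satisfies \eqref{eq41a} and \eqref{eq42} follows.

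I expect the main obstacle to be careful bookkeeping rather than any deep estimate: one must resist commuting $\mathbb{L}$ past powers of $B$ as operators, since the hypothesis grants commutativity only after acting on the distinguished vector $\alpha$. The argument works precisely because the partial-commutativity relation is invoked on vectors of the form $B^{r}\alpha$ at each stage, and because the surviving factors are powers of $B$ alone; keeping the action on $\alpha$ explicit throughout is what legitimises the telescoping in \eqref{eq:fed2}. A secondary point to state carefully is the term-by-term differentiation of the matrix series, which is standard in the finite-dimensional setting assumed here.
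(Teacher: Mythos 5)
Your proof is correct and follows essentially the same route as the paper: expand $\exp[B(t)]$ as a power series, differentiate term by term via the Leibniz rule, use the hypothesis $[\mathbb{L},B^{n}]\,\alpha=0$ to collapse each summand $B^{j}\mathbb{L}B^{m-1-j}\alpha$ to $B^{m-1}\mathbb{L}\,\alpha$, and resum to obtain $\mathbb{L}\exp[B]\,\alpha$. Your additional remarks on the initial condition, uniqueness, and the fact that commutativity is only granted after acting on $\alpha$ are sound refinements of the same argument, not a different approach.
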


\begin{proof}
There exists an obvious decomposition of $\exp[B(t)]$, i.e.:

\begin{equation}\label{eq43}
\exp[B(t)] = \sum_{m=0}^{\infty} \frac{1}{m!} \,B^m (t),
\end{equation}
which allows one to write a formula for the first derivative of $\exp[B(t)]$:
\begin{equation}\label{eq44}
\begin{aligned}
{}& \frac{d \exp[B(t)]}{d t}= \mathbb{L}(t) + \frac{1}{2!}\left\{\mathbb{L}(t) B(t) + B(t) \mathbb{L}(t)\right\} +\\
&+ \frac{1}{3!} \left\{ \mathbb{L}(t) B^2(t) + B(t) \mathbb{L}(t) B(t) + B^2(t) \mathbb{L}(t) \right \} + \dots= \\
& = \sum_{m=1}^{\infty} \frac{1}{m!} \sum_{k=1}^m B^{k-1} (t) \mathbb{L}(t) B^{m-k} (t).
\end{aligned}
\end{equation}
On the other hand, one can notice that the assumption (see \eqref{eq41}) can be transformed in the following way (for any $m,n \in \mathbb{N}$):
\begin{equation}\label{eq45}
\begin{aligned}
{}& \mathbb{L}(t) B^n(t) \,\alpha = B^n(t)\mathbb{L}(t) \,\alpha  \hspace{0.25cm}\Leftrightarrow\\& B^m (t)\mathbb{L}(t) B^n(t) \,\alpha = B^{n+m}(t)\mathbb{L}(t) \,\alpha
\end{aligned}
\end{equation}
Keeping in mind \eqref{eq44} and \eqref{eq45}, one can check whether $\mathrm{vec} \{ \rho(t) \} = \exp[B(t)] \alpha$ satisfies the evolution equation given by \eqref{eq41a}:
\begin{equation}\label{eq46}
\begin{aligned}
{}&\frac{\dd \: \mathrm{vec} \{ \rho(t) \}}{\dd t}  =\frac{\dd\: \exp[B(t)]\: \alpha}{\dd t} = \\
&  = \sum_{m=1}^{\infty} \frac{1}{m!} \sum_{k=1}^m B^{k-1} (t) \mathbb{L}(t) B^{m-k} (t) =\\
& = \mathbb{L}(t)  \sum_{m=1}^{\infty} \frac{1}{m!} \sum_{k=1}^m B^{m-1} (t)\: \alpha = \mathbb{L}(t)  \sum_{m=1}^{\infty} \frac{1}{m!} m B^{m-1} (t) \:\alpha =\\
& = \mathbb{L}(t) \sum_{m=1}^{\infty} \frac{1}{(m-1)!} B^{m-1} (t) \:\alpha = \mathbb{L}(t) \sum_{m=0}^{\infty} \frac{1}{m!} B^{m} (t) \:\alpha =\\
& = \mathbb{L}(t)\: \exp[B(t)]\: \alpha = \mathbb{L}(t) \: \mathrm{vec} \{ \rho(t)\}.
\end{aligned}
\end{equation}
It means that $\mathrm{vec} \{ \rho(t) \}$ defined by the formula \eqref{eq42} satisfies the dynamics given by \eqref{eq41a}, which completes the proof.
\end{proof}

There are three issues connected with the Fedorov theorem that one should be aware of.

Firstly, the Fedorov theorem enables us to write the solution of the evolution equation in the closed form. However, there is a significant limitation -- as the initial vectors one can use only the vector (or vectors) $\alpha$ which satisfy the condition of partial commutativity introduced by \eqref{eq41}. Naturally, if one has two linearly independent vectors $\alpha_1$ and $\alpha_2$ and both of them satisfy \eqref{eq41}, then the linear combination of them $c_1 \alpha_1 + c_2 \alpha_2$ also satisfies the condition form the Fedorov theorem. Therefore, all vectors $\alpha$ which satisfy \eqref{eq41} constitute a subspace in the vector space. The subspace which contains all vectors $\alpha$ shall be denoted by $\mathcal{M}(\mathbb{L}(t))$.

Secondly, from the physical point of view, it is important to be able to determine the trajectory of the state on the basis of the solution of the evolution equation. However, it may happen that when one determines $\alpha$ satisfying the condition \eqref{eq41} for a specific generator of evolution, it turns out that after de-vectorization $\alpha$ is not a proper density matrix. In such a case, the solution with $\alpha$ as the initial vector is not a legitimate state trajectory. For this reason, from the physical point of view, it is required to use as the initial vectors only such $\alpha$ which belongs to the intersection $\mathcal{M}(\mathbb{L}(t)) \cap \mathrm{vec}\{S(\mathcal{H})\}$, where $\mathrm{vec}\{S(\mathcal{H})\}$ refers to the state set of all vectorized density matrices associated with the Hilbert space $\mathcal{H}$.

Thirdly, in practice, there is no need to take into account in \eqref{eq41} all powers of $B^n(t)$ up to infinity because one can always use the Cayley-Hamilton theorem\cite{Hamilton1853,Cayley1858,Frobenius1878}, which states that every matrix satisfies its characteristic polynomial. Therefore, if $B(t)$ is a $\mu \times \mu$ matrix, the $\mu$-th power of $B(t)$ linearly depends on the lower powers. Thus, in general, it is sufficient to consider the powers of $B(t)$ up to $\mu-1$. The number of necessary powers may be additionally reduced provided one can determine the degree of the minimal polynomial of $B(t)$, which can be done numerically for some generators $\mathbb{L}(t)$.

In the context of the Fedorov theorem, it is important to explain how the vectors $\alpha$ satisfying the condition \eqref{eq41} can be obtained. One should notice that we are searching for the subspace which can be expressed by the following formula:
\begin{equation}\label{eq47}
\mathcal{M}(\mathbb{L}(t)) := \bigcap_{n=1}^{\mu-1} \mathrm{Ker} [\,\mathbb{L}(t), B^n(t)\,].
\end{equation}

The formula \eqref{eq47} cannot be easily calculated, however, one might notice a significant similarity between this issue and the problem of finding common eigenvectors of two matrices \cite{Shemesh1984,Jamiolkowski2014}. Therefore, in the context of the Fedorov theorem, one can use the approach introduced by Shemesh in order to transform the formula for the subspace $\mathcal{M} (\mathbb{L}(t))$ into an expression, which will be straightforward in computing. Let us first prove a lemma.

\begin{Lemma}\label{lemma}
For any set of linear operators $\{R_1, \dots, R_{\kappa}\}$ a following relation holds true:
\begin{equation}\label{lemma1}
\bigcap_{i=1}^{\kappa} \mathrm{Ker}\; R_i = \mathrm{Ker} \sum_{i=1}^{\kappa} R_i ^{\dagger}\, R_i,
\end{equation}
where $R_i ^{\dagger}$ denotes the operator dual to $R_i$.
\end{Lemma}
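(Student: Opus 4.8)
The plan is to prove the set equality in \eqref{lemma1} by establishing the two inclusions separately. Throughout I would abbreviate $S := \sum_{i=1}^{\kappa} R_i^{\dagger} R_i$ and work in the underlying (finite-dimensional) inner product space, so that the defining property of the dual operator, $\langle R_i x, y\rangle = \langle x, R_i^{\dagger} y\rangle$, is available.

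The inclusion $\bigcap_{i} \mathrm{Ker}\, R_i \subseteq \mathrm{Ker}\, S$ is immediate: if $x$ satisfies $R_i x = 0$ for every $i$, then $R_i^{\dagger} R_i x = 0$ for every $i$, and summing gives $S x = 0$. No positivity is needed here, and this direction carries essentially no content.

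The substantive direction is the reverse inclusion $\mathrm{Ker}\, S \subseteq \bigcap_{i} \mathrm{Ker}\, R_i$, and the key idea is to convert the single scalar equation $Sx = 0$ into a statement about each $R_i$ by pairing it with $x$ in the inner product. For any $x \in \mathrm{Ker}\, S$ I would compute
\begin{equation}
0 = \langle S x, x\rangle = \sum_{i=1}^{\kappa} \langle R_i^{\dagger} R_i x, x\rangle = \sum_{i=1}^{\kappa} \langle R_i x, R_i x\rangle = \sum_{i=1}^{\kappa} \| R_i x \|^2 .
\end{equation}
Because this is a sum of non-negative real terms that vanishes, every individual term must vanish, so $\|R_i x\| = 0$ and hence $R_i x = 0$ for all $i$; that is, $x \in \bigcap_{i} \mathrm{Ker}\, R_i$. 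Combining the two inclusions yields the claimed identity.

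The one point that deserves care — and what I would flag as the crux rather than a genuine obstacle — is that the positivity argument works precisely because each operator is paired with \emph{its own} adjoint, so that $\langle S x, x\rangle$ collapses to a sum of squared norms with no cross terms. This is what forces all summands to be simultaneously zero; the conclusion would fail for an arbitrary combination $\sum_i c_i R_i$ that is not of the Gram form $\sum_i R_i^{\dagger} R_i$. The remaining steps are routine and rely only on basic properties of the inner product and the adjoint, together with finite-dimensionality to keep all objects well defined.
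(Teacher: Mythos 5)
Your proof is correct, and it is in fact more careful than the paper's own argument. The paper proves the lemma for $\kappa=2$ via a chain of asserted equivalences,
\begin{equation*}
x \in \mathrm{Ker}\,R_1 \cap \mathrm{Ker}\,R_2 \;\Leftrightarrow\; R_1 x = 0 \land R_2 x = 0 \;\Leftrightarrow\; R_1^{\dagger}R_1 x = 0 \land R_2^{\dagger}R_2 x = 0 \;\Leftrightarrow\; \left(R_1^{\dagger}R_1 + R_2^{\dagger}R_2\right) x = 0,
\end{equation*}
and then appeals to an easy generalization to arbitrary $\kappa$. Read literally, however, only the forward implications in that chain are immediate; the two backward implications --- from $R_i^{\dagger}R_i x = 0$ back to $R_i x = 0$, and from the vanishing of the sum back to the vanishing of each summand --- are exactly where the content of the lemma lies, and the paper never justifies them (indeed it never invokes the inner product at all). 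Your proof supplies precisely this missing ingredient: pairing $Sx=0$ with $x$ to obtain $\sum_{i}\|R_i x\|^2 = 0$ and using non-negativity to force every term to vanish. That single computation disposes of both problematic reverse implications at once, and it treats arbitrary $\kappa$ directly rather than by an unstated induction. Your closing observation --- that the argument hinges on the Gram form $\sum_i R_i^{\dagger}R_i$ and would fail for a general combination $\sum_i c_i R_i$ --- is also the correct diagnosis of why the paper's unadorned equivalence chain cannot be taken at face value.
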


\begin{proof}
Let us prove the lemma for two operators $R_1$ and $R_2$ since one can easily generalize the reasoning for a higher number of operators. Then, on the left-hand side of \eqref{lemma1}, we have $\mathrm{Ker} R_1  \cap \mathrm{Ker} R_2$. Next, we observe:
\begin{equation}
\begin{aligned}
x \in \mathrm{Ker} R_1  \cap \mathrm{Ker} R_2 \:\Leftrightarrow\: {}& x \in \mathrm{Ker} R_1 \land  x \in \mathrm{Ker} R_2\\
&R_1 \, x = 0 \land R_2 \, x = 0 \\
&R_1^{\dagger} R_1 \,x = 0 \land R_2^{\dagger} R_2 \,x = 0 \\
&\left(R_1^{\dagger} R_1 + R_2^{\dagger} R_2 \right)\, x = 0 \\
& x \in \mathrm{Ker} \left(R_1^{\dagger} R_1 + R_2^{\dagger} R_2 \right)
\end{aligned}
\end{equation}
and the last part finishes the proof.
\end{proof}

Based on the lemma \ref{lemma}, we can conclude that the closed-form solution according to \eqref{eq42} can be obtained for the initial vectors $\alpha$ which belong to the subspace $\mathcal{M} (\mathbb{L}(t))$ such that:
\begin{equation}\label{eq49}
\mathcal{M}(\mathbb{L}(t)) = \mathrm{Ker} \sum_{n=1}^{\mu-1}\, [\,\mathbb{L}(t), B^n(t)\,]^{\dagger} [\,\mathbb{L}(t), B^n(t)\,].
\end{equation}

To sum up, if one wants to apply the Fedorov theorem in order to obtain a closed-form solution of a differential equation with a time-dependent generator $\mathbb{L}(t)$, one needs to prove that the subspace $\mathcal{M}(\mathbb{L}(t))$ defined by \eqref{eq49} is non-empty, which can be done effectively thanks to the Shemesh criterion. Then, one can write a closed-form solution of the evolution equation: $\mathrm{vec} \{ \rho(t)\} = \exp[B(t)] \alpha$. This solution generates a legitimate trajectory from the physical point of view only if the initial vector $\alpha$ can be considered a vectorized density matrix, i.e. $\alpha \in \mathcal{M} (\mathbb{L}(t)) \cap \mathrm{vec}\{S(\mathcal{H})\}$. Generators $\mathbb{L}(t)$ such that the corresponding subspace $\mathcal{M}(\mathbb{L}(t))$ is non-empty can be called \textit{partially commutative}.

\section{Fedorov theorem in dynamics of open quantum systems}\label{section3}

\subsection{Preliminaries}

In this article, we shall consider the evolution generator $\mathbb{L}(t)$ of $d-$level quantum systems in the form \cite{Breuer2004,Grigoriu2013}:
\begin{equation}\label{e3.1}
\mathbb{L}(t)\, [\rho] = - i \left[H, \rho \right] + \sum_k \gamma_k (t) \left(V_k \rho V_k^{\dagger} - \frac{1}{2} \{V_k ^{\dagger} V_k, \rho \}  \right),
\end{equation}
which can be regarded as a specific type of time-dependent GKSL generator \cite{Gorini1976,Lindblad1976}, such that the jump operators $V_k$ are represented by constant matrices while the relaxation rates $\gamma_k (t)$ are time-dependent. The operator $H$ is hermitian, i.e. $H^{\dagger} = H$, and can be interpreted as the effective Hamiltonian which accounts for the unitary evolution. This generator preserves the Hermiticity and trace of the density matrix, but for negative relaxation rates in some time intervals the evolution features non-Markovian effects \cite{Breuer2009}. For this reason, we shall restrict our analysis only to the relaxation rates such that $\gamma_k (t) \geq 0$ for all $t\geq0$ and for any $k$, which means that the evolution may be called time-dependent Markovian (though the corresponding dynamical map is not a semigroup).

One of the algebraic methods used in the analysis is the technique to obtain a matrix representation of the generator of evolution. Such a procedure is feasible if we apply the property connected with the $\mathrm{vec}$ operator. For any three matrices $A, B, C$ such that their product $ABC$ is computable we have the following relation \cite{Roth1934}:
\begin{equation}\label{e3.2}
\mathrm{vec} \:( ABC ) = (C^T \otimes A) \:\mathrm{vec} B,
\end{equation}
which shall be called the Roth's column lemma. This property has been excessively studied within the field of pure mathematics \cite{Neudecker1969,Hartwig1975,Henderson1981} as well as applied to Physics in order to search for matrix representations of given GKSL generators of evolution \cite{Egger2014,Czerwinski2016,Czerwinski2016a}. Taking into account the Roth's column lemma \eqref{e3.2}, one transforms the generator of evolution given originally by \eqref{e3.1} into the matrix form
\begin{equation}\label{e3.3}
\begin{aligned}
&\mathbb{L}(t) = i \left( H^T  \otimes \mathbb{1}_d - \mathbb{1}_d \otimes H  \right) + \\& + \sum_{k} \gamma_k  (t)  \left ( \overline{V}_k \otimes V_k - \frac{1}{2} \mathbb{1}_d \otimes V_k ^{\dagger} V_k - \frac{1}{2} V_k ^T \overline{V}_k \otimes \mathbb{1}_d \right ),
\end{aligned}
\end{equation}
where $\overline{V}_k$ denotes the complex conjugate of the jump operator $V_k$.

In our analysis, we consider three specific generators of evolution which govern the dynamics of three-level systems: $V-$system, \textit{cascade} and $\Lambda$-system \cite{Hioe1982}. For years such types of dynamics have been an important field of research since they are connected to optimal control of quantum dissipative systems in the context of laser cooling \cite{Rooijakkers1997,Tannor1999,Sklarz2004}. Therefore, we assume that $d = 3$ and the vectors $\{\ket{1}, \ket{2}, \ket{3}\}$ stand for the standard basis in the Hilbert space $\mathcal{H}$. A jump operator $V_k$ which corresponds to the transition form $j-$th level to $i-$th level shall be defined as $V_k := \ket{i} \bra{j} \equiv E_{ij}$.

As far as four-level systems are concerned ($d=4$), the standard basis is denoted by $\{\ket{1}, \ket{2}, \ket{3}, \ket{4}\}$. We demonstrate that one can define \textit{cascade}-type of evolution with $3$ jump operators accompanied by time-dependent decoherence rates, and then apply the Fedorov theorem to search for the dynamical map.

\subsection{Three-level $V-$system}

Three-level $V-$system relates to a physical scenario when an atom has two excited levels denoted by $\ket{1}$ and $\ket{3}$, but one ground state $\ket{2}$. The dynamics describes a decay from one of the excited level into the ground state. Thus, we have two jump operators: $E_{21} := \ket{2}\bra{1}$ and $E_{23} := \ket{2}\bra{3}$. We assume that the corresponding decoherence rates are given by the functions: $\gamma_{21} (t) := \mathrm{sin}^2 \omega t$ and $\gamma_{23} (t) := \mathrm{cos}^2 \omega t$. Then, based on the Roth's column lemma, the matrix form of the generator can be found according to \eqref{e3.3}:
\begin{equation}\label{e3.4}
\begin{aligned}
\mathbb{L}_V (t) {}&=  i \left( H_V^T  \otimes \mathbb{1}_3 - \mathbb{1}_3 \otimes H_V \right) + \\ &+ \mathrm{sin}^2\omega t \left (E_{21} \otimes E_{21} - \frac{1}{2} \mathbb{1}_3 \otimes E_{11} -\frac{1}{2} E_{11} \otimes \mathbb{1}_3  \right) + \\
& + \mathrm{cos}^2 \omega t \left(E_{23} \otimes E_{23} - \frac{1}{2} \mathbb{1}_3 \otimes E_{33} - \frac{1}{2} E_{33} \otimes \mathbb{1}_3  \right),
\end{aligned}
\end{equation}
where $H_V$ denotes the unperturbed Hamiltonian which describes three energy levels of the $V-$system, i.e. $H_V = \mathrm{diag} ( \mathcal{E}_1, 0, \mathcal{E}_3 )$ (the energy of the ground level is normalized to zero, i.e. $\mathcal{E}_2= 0$).

One can check that the generator for the $V-$system satisfies the following relations:
\begin{equation}\label{e3.5}
\begin{split}
[\mathbb{L}_V (t), \: \mathbb{L}_V (\tau) ] = 0 \hspace{0.5cm}\forall \: t,\tau \geq 0\\
[ \mathbb{L}_V (t) ,  \int \mathbb{L}_{V} (t) \, d t] = 0,
\end{split}
\end{equation}
which implies that the closed-form solution of the evolution equation can be obtained based on the Lappo-Danilevsky criterion (without the Fedorov generalization):
\begin{equation}\label{vdynamics}
\rho (t) = \exp \left(\int_0^t \mathbb{L}_V (\tau) d \tau \right) [\rho(0)].
\end{equation}

Let us investigate, as a specific example, the trajectory of the initial state: $\rho(0) = 1/2 \ket{1} \bra{1} + 1/2 \ket{3} \bra{3}$, which corresponds to a statistical mixture of two excited states with equal probabilities. The trajectory of this state can be described by a following dynamical map:
\begin{equation}\label{e3.6}
\begin{split}
&\rho(t) =\\ & \begin{pmatrix} \frac{1}{2} e^{\frac{-2 \omega t +  \mathrm{sin}(2 \omega t)}{4 \omega}} & 0 & 0 \\
0 & 1 - e^{-\frac{t}{2}} \:\mathrm{Cosh}\left[ \frac{\mathrm{sin} (2 \omega t)}{4 \omega} \right]  & 0 \\ 0 & 0 & \frac{1}{2} e^{- \frac{2 \omega t +  \mathrm{sin}(2 \omega t)}{4 \omega}} \end{pmatrix}.
\end{split}
\end{equation}
In order to study in detail the dynamics governed by the generator \eqref{e3.4}, let us consider the probability of finding the quantum system in each of the possible states as a function of time. By $p_i (t) $ we denote the probability of finding the system in $i-$th state at time instant $t$. One can find the plots in \figureref{vplot}. 

\begin{figure}[h!]
	\centering
	\begin{tabular}{c}
		\centered{\includegraphics[width=0.95\columnwidth]{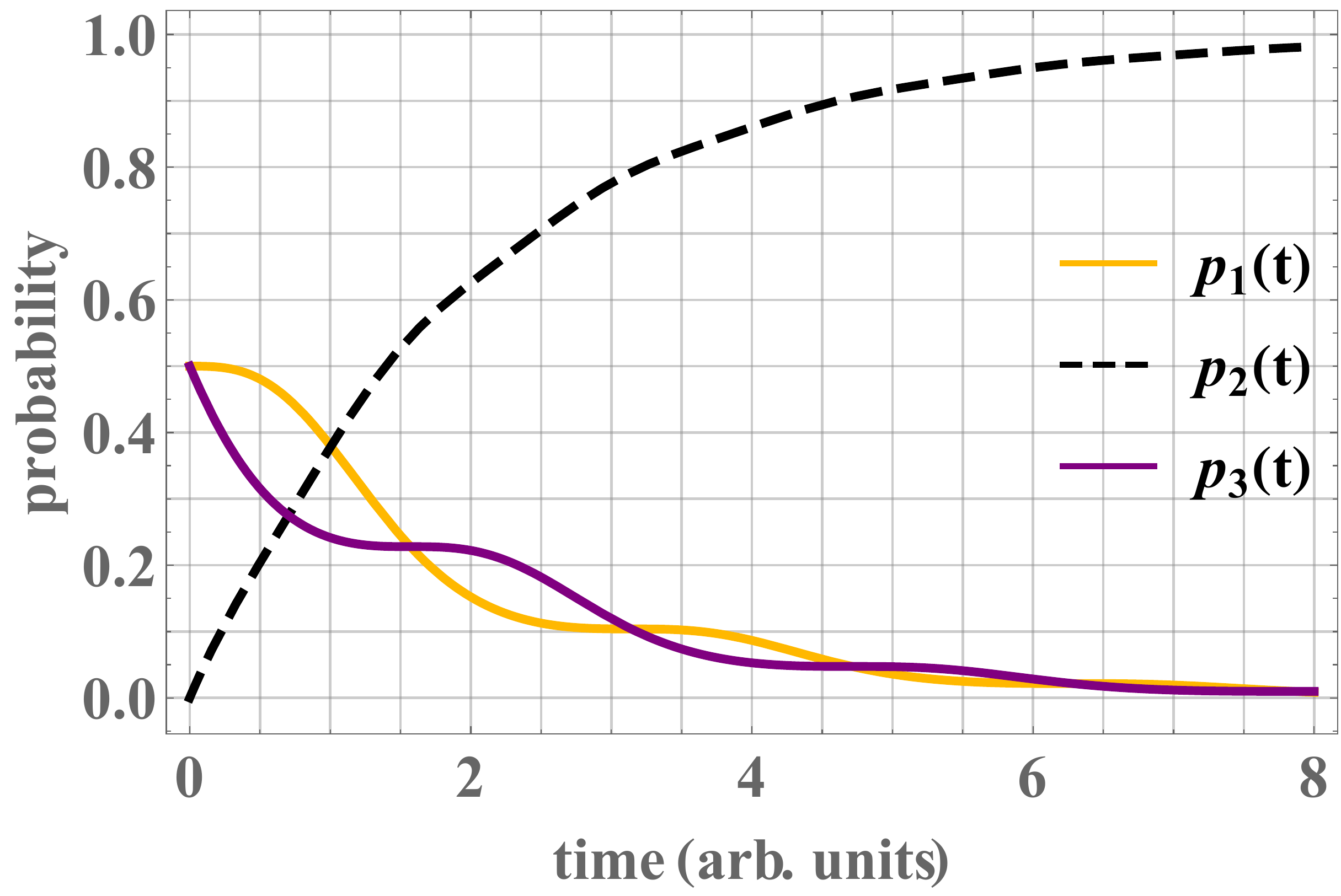}} \\
	\end{tabular}
	\caption{Plots present the probability of finding the three-level $V-$system in one of the possible states.}
	\label{vplot}
\end{figure}

One can observe that the probability of finding the system in the state $\ket{2}$ is an increasing function the value of which asymptotically converges to $1$. It is not an unexpected result since the $V-$model describes a three-level system which decays into the ground state in time. Nonetheless, it is worth noting that the probabilities $p_{1} (t) $ and $p_{3} (t)$ present specific shapes due to the fact that we introduced the oscillating functions (i.e. $\mathrm{sin}\, \omega t$ and $\mathrm{cos} \,\omega t$) into the decoherence rates. One could exchange the relaxation rates of the generator \eqref{e3.4} into different time-dependent functions and then explore other time characteristics of the probabilities.

In order to investigate more effects, one can add phase factors into the off-diagonal elements of the initial density matrix, i.e. $\rho_{13} (0) = 1/2\, e^{- i \phi}$ and $\rho_{31} (0) = 1/2\, e^{ i \phi}$, where $\phi$ stands for the relative phase between the states $\ket{1}$ and $\ket{3}$. Such a generalization does not affect the formulas for probabilities as presented in \figureref{vplot}, but allows one to additionally study how the phase factors change in time. Then, by applying the dynamics \eqref{vdynamics}, one would obtain:
\begin{equation}
\rho_{13} (t) = \frac{1}{2}\, e^{\left(-1/2+i (\mathcal{E}_3-\mathcal{E}_1) \right) t} \, e^{- i \phi} \hspace{0.25cm}\text{and}\hspace{0.25cm} \rho_{31} (t) =  \overline{\rho_{13} (t)},
\end{equation}
which means that the relative phase $\phi$ between the energy states $\ket{1}$ and $\ket{3}$ vanishes while the initial state decays into the ground level $\ket{2}$. The phase-damping effect is caused by the factor $e^{-1/2 t}$, whereas the other coefficient emerging from the evolution, i.e. $e^{ i  (\mathcal{E}_3-\mathcal{E}_1) t}$, makes the phase factor rotate on the complex plane. For arbitrary $\mathcal{E}_3$ and $\mathcal{E}_1$, the time evolution of the phase factor $\rho_{31} (t)$ is presented in \figureref{phase}.

\begin{figure}[h!]
	\centering
	\begin{tabular}{c}
		\centered{\includegraphics[width=0.95\columnwidth]{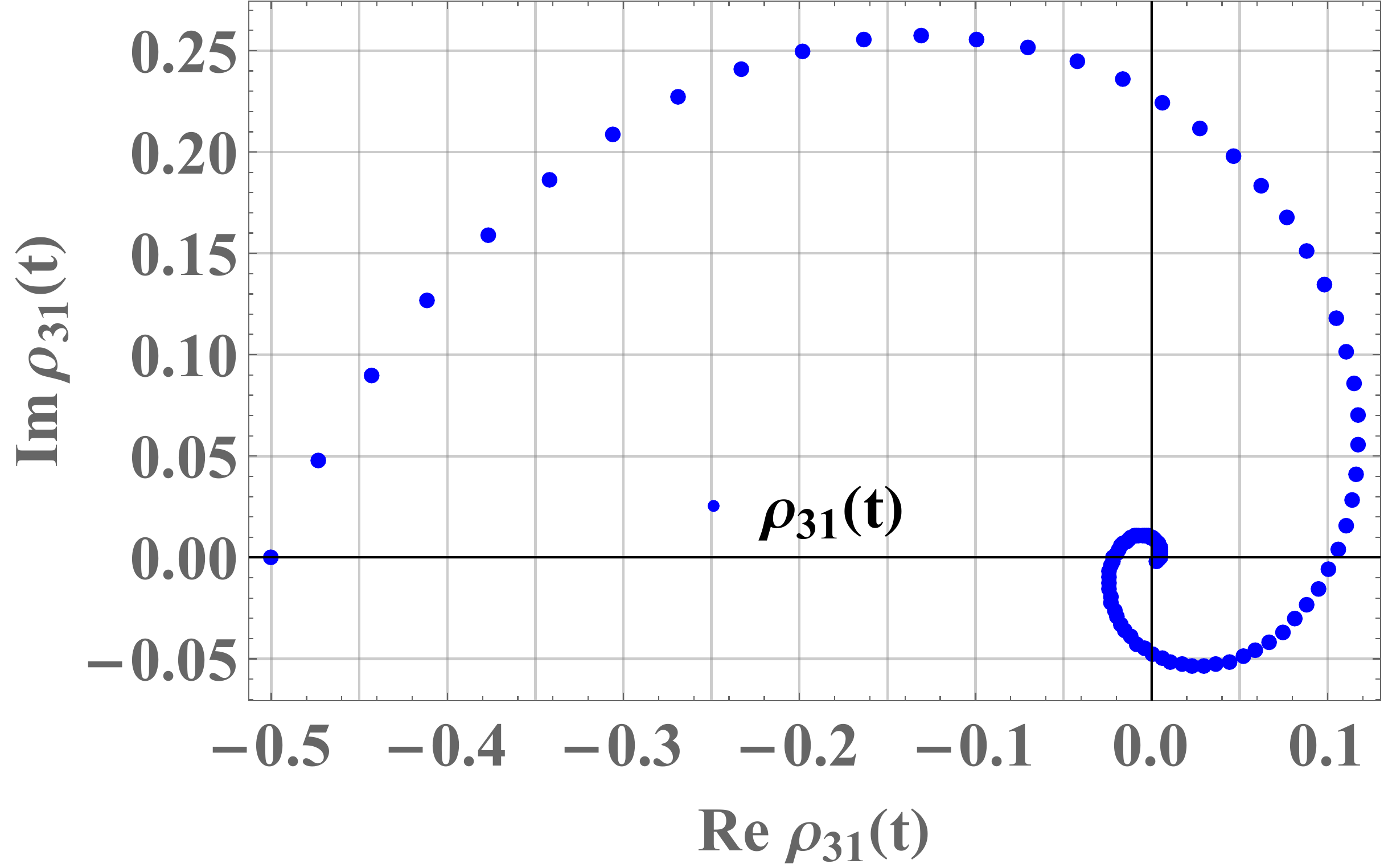}} \\
	\end{tabular}
	\caption{Plot presents the trajectory of $\rho_{31} (t)$ on the complex plane, assuming that the initial value of the relative phase equals $\pi$.}
	\label{phase}
\end{figure}

\subsection{Three-level \textit{cascade} system}

The tree-level model called \textit{cascade} describes a situation when the system can relax from the state $\ket{3}$ into the middle level $\ket{2}$ and then into the ground state denoted by $\ket{1}$. Since two kinds of transition are admissible, we have two jump operators: $E_{23} := \ket{2} \bra{3}$ and $E_{12} := \ket{1}\bra{2}$. We assume that the corresponding relaxation rates are again given by the functions: $\gamma_{23} (t) := \mathrm{sin}^2 \omega t$ and $\gamma_{12} (t) := \mathrm{cos}^2 \omega t$. This leads to the generator of evolution in the following representation:
\begin{equation}\label{e3.7}
\begin{aligned}
\mathbb{L}_C (t) {}&=  i \left( H_C^T  \otimes \mathbb{1}_3 - \mathbb{1}_3 \otimes H_C \right) + \\ & \mathrm{sin}^2\omega t \left (E_{23} \otimes E_{23} - \frac{1}{2} \mathbb{1}_3 \otimes E_{33} -\frac{1}{2} E_{33} \otimes \mathbb{1}_3  \right) + \\
& + \mathrm{cos}^2 \omega t \left(E_{12} \otimes E_{12} - \frac{1}{2} \mathbb{1}_3 \otimes E_{22} - \frac{1}{2} E_{22} \otimes \mathbb{1}_3  \right),
\end{aligned}
\end{equation}
where $H_C$ denotes the unperturbed Hamiltonian that describes three symmetric energy levels, i.e. $H_C = \mathrm{diag} (-\mathcal{E}, 0, \mathcal{E} )$ (the energy of the intermediate level is normalized to zero).

One can check that for the generator $\mathbb{L}_C (t)$ we obtain:
\begin{equation}\label{e3.8}
\begin{split}
[\,\mathbb{L}_C (t), \: \mathbb{L}_C (\tau)\,] \neq  0,\\
[\, \mathbb{L}_C (t), \: \int \mathbb{L}_C (t) \,d t\,] \neq 0,
\end{split}
\end{equation}
which implies that the sufficient conditions for the closed-form solution are not satisfied. Therefore, there is a need for a more general approach. One can consider the Fedorov theorem as a possible technique to solve the evolution equation with the generator \eqref{e3.7}.

In order to effectively apply the Fedorov theorem, we first need to numerically determine the minimal polynomial of $\int \mathbb{L}_C (t) \,d t$. The specific coefficients of the polynomial are of little interest since we focus on its degree which equals $6$. This means that for any $t\geq 0$ the operator $(\int \mathbb{L}_C (t) \,d t)^6$ can be expressed by means of the lower powers of $\int \mathbb{L}_{C} (t) \,\dd t$. Combining this observation with the earlier result \eqref{eq49}, we need to investigate the kernel of the operator:
\begin{equation}\label{e3.9}
\begin{aligned}
&\Gamma^{(C)} \equiv \\&
\sum_{n=1}^{5} \left[\mathbb{L}_C (t), \left(\int \mathbb{L}_{C} (t) \,\dd t\right)^n\right]^{\dagger} \left[\mathbb{L}_C (t), \left(\int \mathbb{L}_{C} (t) \,\dd t\right)^n\right].
\end{aligned}
\end{equation}
The matrix representation of $\Gamma^{(C)}$ can be found numerically. One can obtain that $\Gamma^{(C)}_{99} = g(t) \neq 0$ and all the other elements are equal zero. This means the intersection of $\mathrm{vec} \,\mathcal{S} (\mathcal{H})$ and $\mathcal{M} (\mathbb{L}_C (t)) = \mathrm{Ker} \,\Gamma^{(C)}$ can be written as:
\begin{equation}\label{e3.10}
\mathrm{vec}\, \rho \in \mathrm{vec} \,\mathcal{S} (\mathcal{H}) \cap \mathcal{M} (\mathbb{L}_C(t)) \:\Leftrightarrow\: \rho \in\mathcal{S} (\mathcal{H})  \land  \rho_{33} = 0,
\end{equation}
which implies that the evolution equation with the generator \eqref{e3.7} has a closed-form solution only for the initial states which assume zero probability for the level $\ket{3}$. Thus, the dynamical map can be written as:
\begin{equation}\label{e3.11}
\rho (t) = \exp \left(\int_0^t \mathbb{L}_C (\tau) d \tau \right) [\rho(0)],
\end{equation}
where $\rho (0) = p \, \ket{1} \bra{1} + (1-p) \,\ket{2} \bra{2}$ and $0\leq p \leq1$ (one may add phase factors on the off-diagonal elements). The explicit form of $\rho (t)$ can be computed:
\begin{equation}\label{e3.12}
\rho (t) = \begin{pmatrix} 1 - \xi (t) & 0 & 0\\ 0 & \xi (t)  & 0 \\ 0 & 0 & 0  \end{pmatrix},
\end{equation}
where
\begin{equation}\label{e3.13}
\xi (t) := \left(1 -p \right) \exp \left(- \frac{2 \omega t + \mathrm{sin} (2 \omega t)}{4 \omega}\right).
\end{equation}

In order to illustrate the results of the method, let us assume that $p=0$, i.e. the initial density matrix $\rho(0) = \ket{2} \bra{2}$. The plots in \figureref{cplot} present the probabilities $p_{1} (t) $ and $p_{2} (t)$ (naturally $p_{3} (t) = 0$ for all $t \geq 0$).

\begin{figure}[h!]
	\centering
	\begin{tabular}{c}
		\centered{\includegraphics[width=0.95\columnwidth]{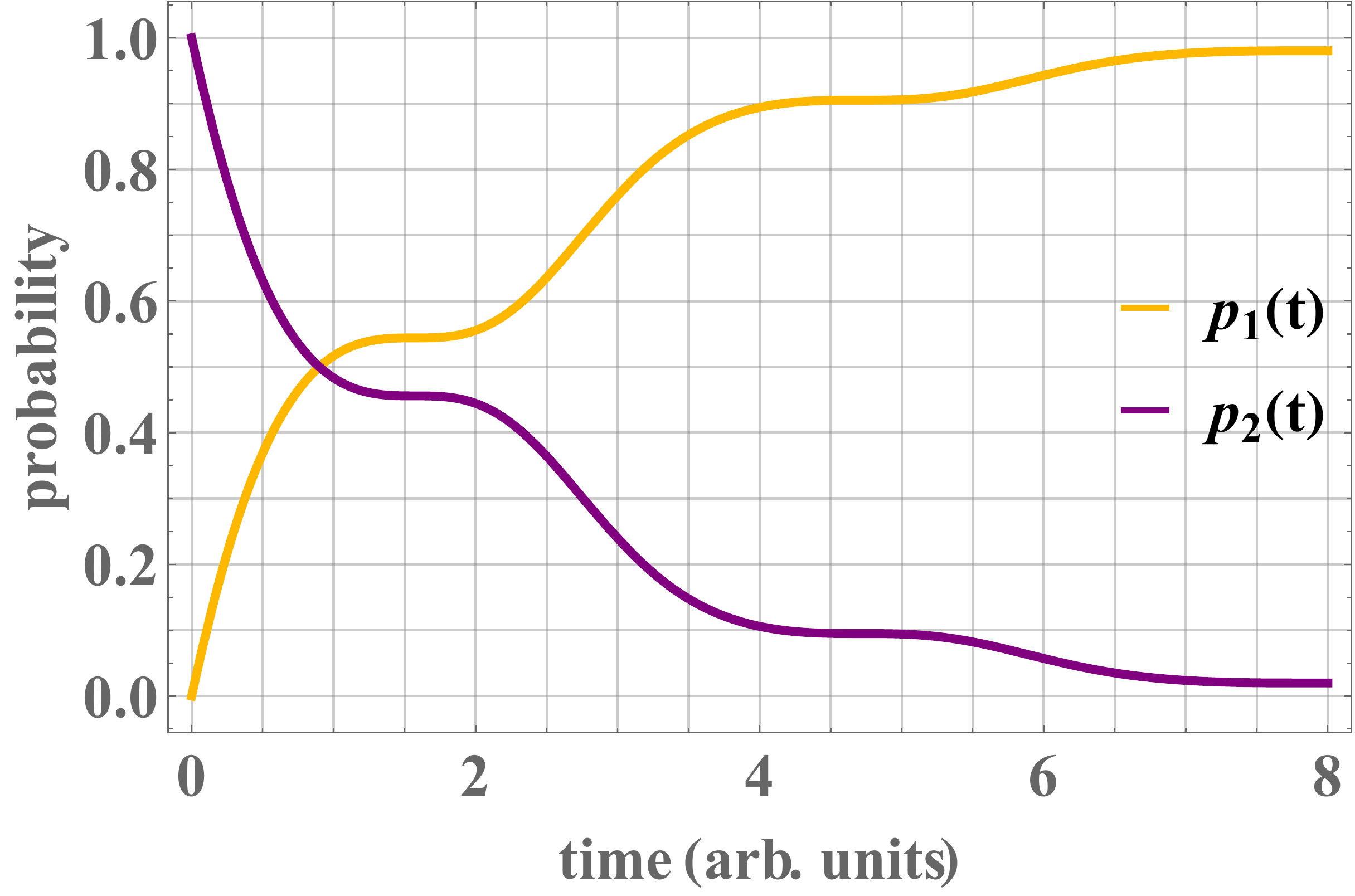}} \\
	\end{tabular}
	\caption{Plots present the probability of finding the three-level \textit{cascade} system in one of the possible states: $\ket{1}$ or $\ket{2}$.}
	\label{cplot}
\end{figure}

The results demonstrate the decay from the middle state $\ket{2}$ into the ground state $\ket{1}$ in time domain. The character of the probability graphs could by changed by modifying the functions which define the time-dependent relaxation rates: $\gamma_{23} (t)$ and $\gamma_{12} (t)$.

The process of relaxation within the \textit{cascade} model can also be analyzed by means of time-evolution of the purity and the von Neumann entropy. For a system described by a density matrix $\rho(t)$, the purity, which shall be denoted by $\pi(t)$, is defined as $\pi (t) := \tr \{\rho^2 (t)\}$. The von Neumann entropy has the standard definition: $S(t) := \tr \{\rho (t) \ln \rho^2 (t) \}$. Note that usually these figures are computed for a given state, whereas we treat them as the functions of time since we wish to follow the dynamics of entropy and purity for the initial state $\rho(0) = p \, \ket{1} \bra{1} + (1-p) \,\ket{2} \bra{2}$. We obtain the formulas:
\begin{equation}\label{e3.13a}
\begin{split}
&\pi (t) =  2 \xi^2 (t) - 2 \xi (t) +1,\\
&S(t) = - (1 - \xi (t)) \ln \{ 1 - \xi (t)\} -  \xi (t) \ln \{\xi (t)\}.
\end{split}
\end{equation}

To be more specific, let us again assume that $p=0$. And for the initial state $\rho(0) = \ket{2} \bra{2}$ we can plot the functions: $\pi (t)$ and $S(t)$ \figureref{purity}.

\begin{figure}[h!]
	\centering
	\begin{tabular}{c}
		\centered{\includegraphics[width=0.95\columnwidth]{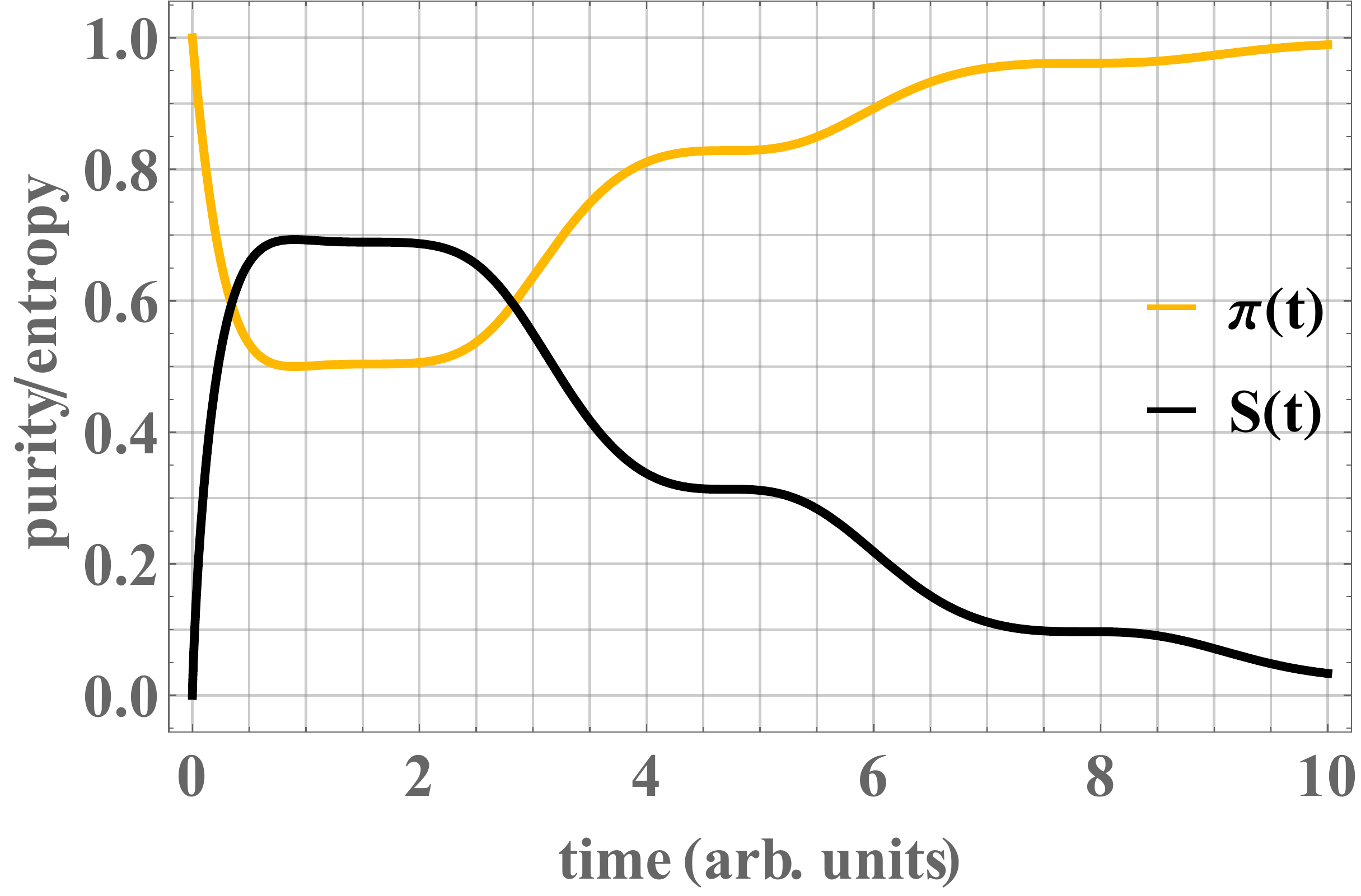}} \\
	\end{tabular}
	\caption{Plots present the purity $\pi (t)$ and the von Neumann entropy $S(t)$ of a dissipative system subject to \textit{cascade} decoherence model.}
	\label{purity}
\end{figure}

Since the input was a pure state, we have $\pi(0)= 1$ and $S(0) = 0$. Then, the state is getting more mixed with time. At some point, we have equal probabilities for $\ket{2}$ and $\ket{1}$, which means that the purity drops down to its minimal value, i.e. $\pi (t') = 1/2$ whereas the von Neumann entropy reaches its maximum value $S(t') = \ln 2 \approx 0.69315$. In time, both functions are approaching to their initial values since the final state is also pure. The shape of the functions reflects the definitions of the relaxation rates.

One can also consider time-evolution of off-diagonal elements of the density matrix by imposing a relative phase $\phi$ between the states $\ket{1}$ and $\ket{2}$. Then, the initial density matrix $\sigma (0)$ can be introduced in the form:
\begin{equation}\label{input2}
\sigma (0)  =  \frac{1}{2} \begin{pmatrix} 1 & e^{- i \phi} & 0 \\ e^{ i \phi} &1  & 0 \\  0 & 0 & 0  \end{pmatrix}.
\end{equation}

Such a change in the initial density matrix allows one to study dynamics of the phase factors. Based on the dynamical map \eqref{e3.11}, we obtain:
\begin{equation}
\sigma_{12} (t) = \frac{1}{2} \exp\left( \left(-\frac{1}{4} + \mathcal{E} i \right)t -\frac{\sin 2 \omega t}{8 \omega} \right) e^{- i \phi} \hspace{0.25cm}
\end{equation}
and $\sigma_{21} (t) = \overline{\sigma_{12} (t)}$, which gives the trajectory of the phase factor as presented in \figureref{phase2} (for arbitrary $\omega$ and $\mathcal{E}$).

\begin{figure}[h!]
	\centering
	\begin{tabular}{c}
		\centered{\includegraphics[width=0.95\columnwidth]{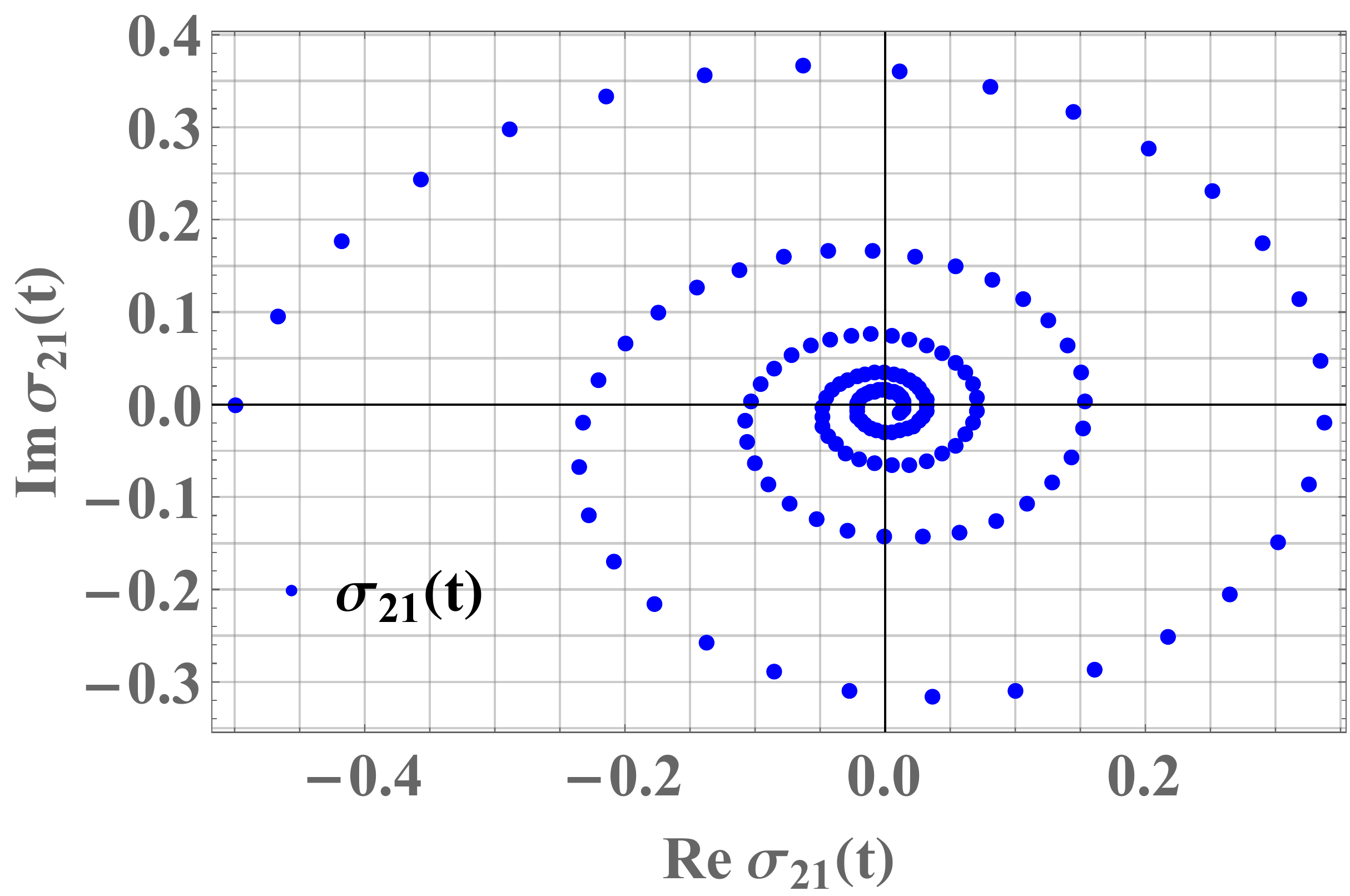}} \\
	\end{tabular}
	\caption{Plot presents the trajectory of $\sigma_{21} (t)$ on the complex plane, assuming that the initial value of the relative phase equals $\pi$.}
	\label{phase2}
\end{figure}

\subsection{Three-level $\Lambda$-system}

Quantum $\Lambda-$system with three energy levels belongs to very useful models studied in different areas of modern Physics, e.g. \cite{Brion2007,Zhou2016,Parshkov2018}. It is assumed that the system decays from the excited level $\ket{2}$ into one of two lower-energy states: $\ket{1}$ or $\ket{3}$. Thus, we have two jump operators: $E_{12} := \ket{1}\bra{2}$ and $E_{32} := \ket{3}\bra{2}$. We shall consider the following generator of evolution:
\begin{equation}\label{e3.14}
\begin{aligned}
\mathbb{L}_{\Lambda} (t) {}&= i \left( H_{\Lambda}^T  \otimes \mathbb{1}_3 - \mathbb{1}_3 \otimes H_{\Lambda} \right) + \\ &+  f_1 (t) \left (E_{12} \otimes E_{12} - \frac{1}{2} \mathbb{1}_3 \otimes E_{22} -\frac{1}{2} E_{22} \otimes \mathbb{1}_3  \right) + \\
& + f_2 (t)  \left(E_{32} \otimes E_{32} - \frac{1}{2} \mathbb{1}_3 \otimes E_{22} - \frac{1}{2} E_{22} \otimes \mathbb{1}_3  \right).
\end{aligned}
\end{equation}
where the functions $f_i (t): \mathcal{I} \rightarrow \mathbb{R}_+$ are assumed to be linearly independent and $H_{\Lambda}$ stands for the Hamiltonian which describes the energy levels, i.e.  $H_{\Lambda} = \mathrm{diag} (- \mathcal{E}_1, 0, -\mathcal{E}_3)$ for $\mathcal{E}_2, \mathcal{E}_3 > 0$. One can notice that this generator is not functionally commutative, neither it commutes with its integral. The minimal polynomial of \eqref{e3.14} cannot be easily determined without any assumptions concerning the functions: $f_1 (t), f_2 (t)$ and the energies: $\mathcal{E}_1, \mathcal{E}_3$, which means that in order to consider the Fedorov theorem in the context of $\Lambda-$systems we need to search for the kernel of:
\begin{equation}\label{e3.16}
\begin{aligned}
& \Gamma^{(\Lambda)} \equiv \\ & 
\sum_{n=1}^{8} \left[\mathbb{L}_{\Lambda}(t), \left(\int \mathbb{L}_{\Lambda}(t) \,d t\right)^n\right]^{\dagger} \left[\mathbb{L}_{\Lambda}(t), \left(\int \mathbb{L}_{\Lambda}(t) \,d t\right)^n\right].
\end{aligned}
\end{equation}
Interestingly, regardless of the functions: $f_1 (t), f_2 (t)$ and the energies: $\mathcal{E}_1, \mathcal{E}_3$, it can be checked numerically that $\Gamma^{(\Lambda)}_{55} \neq 0$ and all the other elements are zeros. For this reason, we can write
\begin{equation}\label{e3.17}
\mathrm{vec}\, \rho \in \mathrm{vec} \,\mathcal{S} (\mathcal{H}) \cap \mathcal{M} (\mathbb{L}_{\Lambda}(t) ) \:\Leftrightarrow\: \rho \in\mathcal{S} (\mathcal{H})  \land  \rho_{22} = 0,
\end{equation}
which means that the differential equation with the generator \eqref{e3.14} has a closed-form solution for example when the initial state is given by $\rho_S (0) = p \ket{1}\bra{1} + (1-p) \ket{3}\bra{3}$. However such a state, which is a statistical mixture of two lower-energy states, is stationary because the dynamics does not allow any transitions between the levels $\ket{1}$ and $\ket{3}$. Thus, for any functions $f_1 (t)$ and $f_2 (t)$, we have
\begin{equation}\label{e3.18}
\rho(t) = \exp \left(\int_0^t \mathbb{L}_{\Lambda} (\tau) d \tau \right) [\rho_S (0)] = \rho_S (0).
\end{equation}
Alternatively, one impose a relative phase between the states $\ket{1}$ and $\ket{3}$ and consider how the dynamics influence the off-diagonal elements. If we introduce the initial state in the form:
\begin{equation}\label{laminput}
\rho (0)  =  \frac{1}{2} \begin{pmatrix} 1 & 0 & e^{- i \phi} \\ 0 &0  & 0 \\ e^{ i \phi} & 0 & 1  \end{pmatrix},
\end{equation}
where $\phi$ stands for the relative phase, then one can observe that such initial state also satisfies the condition of partiall commutativity. If we impose the dynamical map $\exp \left(\int_0^t \mathbb{L}_{\Lambda} (\tau) d \tau \right)$ on the state \eqref{laminput}, we obtain:
\begin{equation}
\rho (t) =   \frac{1}{2} \begin{pmatrix} 1 & 0 & e^{(\mathcal{E}_3-\mathcal{E}_1)t \,i} e^{- i \phi} \\ 0 &0  & 0 \\ e^{(\mathcal{E}_1 - \mathcal{E}_3) t \, i} e^{ i \phi} & 0 & 1  \end{pmatrix},
\end{equation}
which means that the phase factor rotates on the complex plane in time. The oscillations of the phase factor are attributed solely to the unitary evolution. If the energy levels were degenerate, i.e. $\mathcal{E}_3 =\mathcal{E}_1=0$, then the input state \eqref{laminput} would be stationary.

\subsection{Four-level \textit{cascade} system}

The four-level \textit{cascade} model describes a physical situation when the system can relax from the highest state $\ket{4}$ into the lower level $\ket{3}$, then into the state $\ket{2}$, and finally into the ground state denoted by $\ket{1}$. Since three kinds of transition are admissible, we have $3$ jump operators: $E_{34} := \ket{3} \bra{4}$, $E_{23} := \ket{2} \bra{3}$ and $E_{12} := \ket{1}\bra{2}$. There are plenty of possible time-dependent decoherence rates that might be analyzed in the context of such dynamics. We shall assume that the corresponding relaxation rates are given by the functions: $\gamma_{34} (t) := e^{- \omega t} $ and $\gamma_{23} (t) = \gamma_{12} (t)= \mathrm{sin}^2 (3\, \omega t)$. This leads to the generator of evolution in the following representation:

\begin{equation}\label{e3.19}
\begin{aligned}
{}&\mathbb{L}_{FC} (t) = i \left( H_{FC}^T  \otimes \mathbb{1}_4 - \mathbb{1}_4 \otimes H_{FC} \right) + \\
&+ e^{- \omega t} \left (E_{34} \otimes E_{34} - \frac{1}{2} \mathbb{1}_4 \otimes E_{44} -\frac{1}{2} E_{44} \otimes \mathbb{1}_4  \right) + \\
& + \mathrm{sin}^2 (3\, \omega\, t) \left(E_{23} \otimes E_{23} - \frac{1}{2} \mathbb{1}_4 \otimes E_{33} - \frac{1}{2} E_{33} \otimes \mathbb{1}_4  \right)+\\
& + \mathrm{sin}^2 (3\, \omega\, t) \left(E_{12} \otimes E_{12} - \frac{1}{2} \mathbb{1}_4 \otimes E_{22} - \frac{1}{2} E_{22} \otimes \mathbb{1}_4  \right),
\end{aligned}
\end{equation}
where $H_{FC}$ denotes a four-level \textit{cascade} Hamiltonian. The energy levels are assumed to be symmetric, i.e. $H_{FC} = \mathrm{diag} (- \mathcal{E}_2, -\mathcal{E}_1, \mathcal{E}_1, \mathcal{E}_2 )$ for $\mathcal{E}_1, \mathcal{E}_2>0$.One can verify that the generator $\mathbb{L}_{FC} (t)$ satisfies neither the condition of functional commutativity nor commutativity with its integral. Therefore, it is desirable to search for other methods which can be used to solve the evolution equation governed by the generator \eqref{e3.19}.

We investigate the kernel of the operator $\Gamma^{(FC)}$ (cf. \eqref{e3.9}). The matrix representation of this operator can be determined numerically. One can then observe that $\Gamma^{(FC)}_{16\, 16} = g(t)$, whereas the other elements are zeros. This means the intersection of $\mathrm{vec} \,\mathcal{S} (\mathcal{H})$ and $\mathcal{M} (\mathbb{L}_{FC} (t)) = \mathrm{Ker} \,\Gamma^{(FC)}$ can be written as:
\begin{equation}\label{e3.20}
\mathrm{vec}\, \rho \in \mathrm{vec} \,\mathcal{S} (\mathcal{H}) \cap \mathcal{M} (\mathbb{L}_{FC} (t)) \:\Leftrightarrow\: \rho \in\mathcal{S} (\mathcal{H})  \land  \rho_{44} = 0,
\end{equation}
which implies that the evolution equation with the generator \eqref{e3.19} has a closed-form solution only for the initial states which assume zero probability for the level $\ket{4}$. In other words, we are able to follow the dynamics in closed form only if we reduce the dimension of the system by one. Then, the dynamical map can be written as:
\begin{equation}\label{e3.21}
\rho (t) = \exp \left(\int_0^t \mathbb{L}_{FC} (\tau) d \tau \right) [\rho(0)],
\end{equation}
where $\rho (0)$ denotes an initial state satisfying \eqref{e3.20}, e.g. $\rho (0)= q_1 \, \ket{1} \bra{1} + q_2 \,\ket{2} \bra{2} + q_3 \,\ket{3} \bra{3}$ and $\{q_1, q_2, q_3\}$ stands for a probability distribution (one may add phase factors on the off-diagonal elements).

Let us study a specific example of this kind of dynamics by assuming that the initial state has a form: $\rho (0) = 1/3 \,\ket{2} \bra{2} + 2/3 \,\ket{3} \bra{3}$. Based on the closed-form solution \eqref{e3.21} one can compute:
\begin{equation}
\begin{cases}
p_{1}(t) = 1 + \frac{1}{18 \omega} \left( e^{\frac{-6 \omega t + \mathrm{sin} (6\, \omega\, t)}{12 \omega}} \left( -6 (3+t) \omega + \mathrm{sin} (6\, \omega\, t)  \right) \right)\\
 \\
p_{2} (t) = \frac{1}{18 \omega} \left( e^{\frac{-6 \omega t + \mathrm{sin} (6\, \omega\, t)}{12 \omega}} \left( 6 (1+t) \omega - \mathrm{sin} (6\, \omega\, t)  \right) \right)\\
 \\
p_{3} (t) = \frac{2}{3}  e^{\frac{-6 \omega t + \mathrm{sin} (6\, \omega\, t)}{12 \omega}}
\end{cases}
\end{equation}
where $p_{k} (t)$, like before, stands for the probability of finding the system in $k-$th state. In order to track the changes that occur in the system during the evolution, the functions $p_{k} (t)$ are presented in \figureref{fourlevel}.

\begin{figure}[h!]
	\centering
	\begin{tabular}{c}
		\centered{\includegraphics[width=0.95\columnwidth]{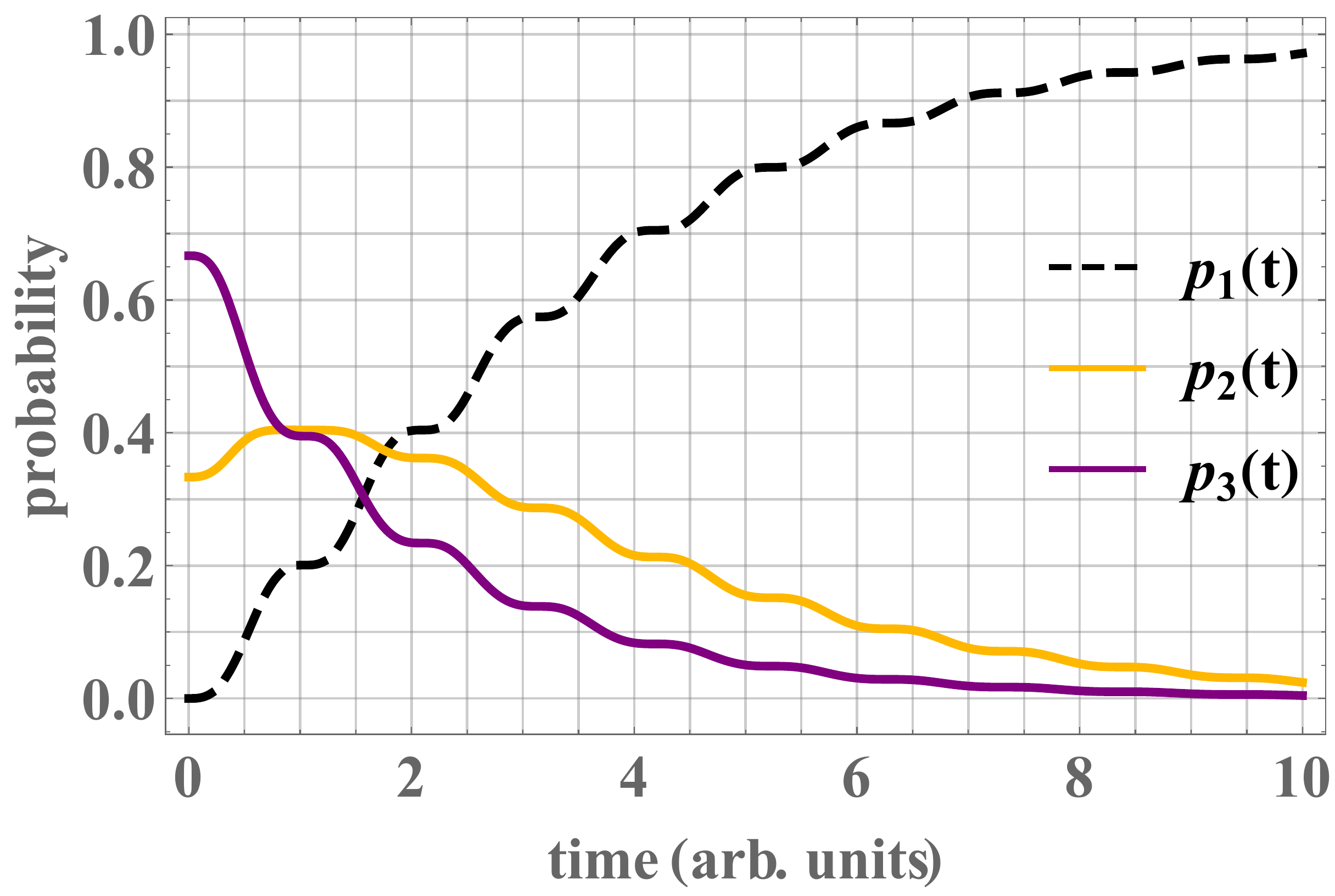}} \\
	\end{tabular}
	\caption{Plots present the probability of finding the four-level \textit{cascade} system in one of the possible states: $\ket{1}$, $\ket{2}$ or $\ket{3}$.}
	\label{fourlevel}
\end{figure}

Similarly as before, one can follow other characteristics of quantum system, such as the purity, denoted by $\pi (t)$, and the von Neumann entropy -- $S(t)$. In \figureref{figure5} one can observe the plots of these functions.

\begin{figure}[h!]
	\centering
	\begin{tabular}{c}
		\centered{\includegraphics[width=0.95\columnwidth]{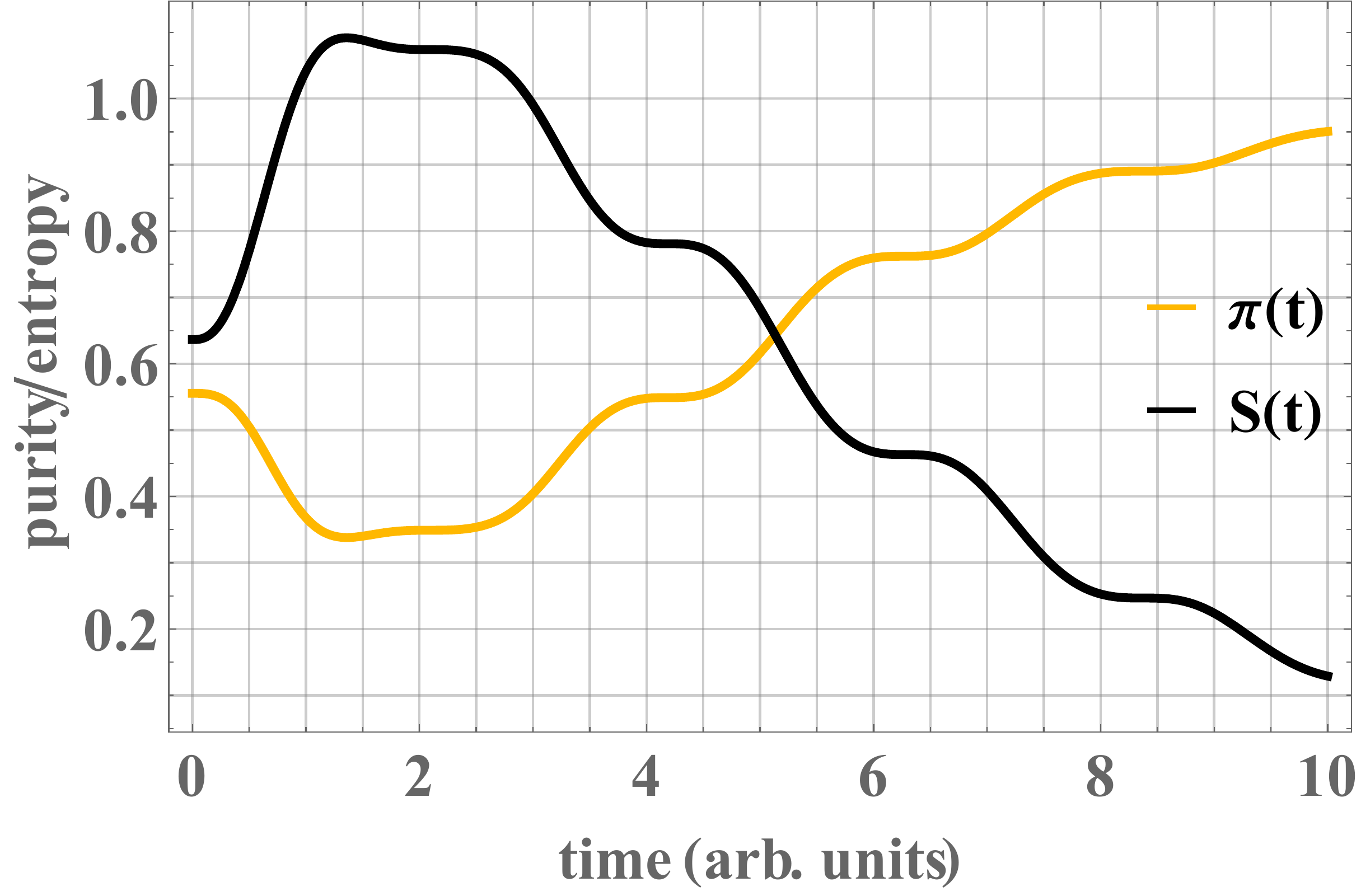}} \\
	\end{tabular}
	\caption{Plots present the purity and the von Neumann entropy of a dissipative four-level system subject to \textit{cascade} decoherence model.}
	\label{figure5}
\end{figure}

It is worth noting that one can choose any specific state satisfying the condition \eqref{e3.20} (e.g. with phase factors) and track its characteristics in time, assuming that the evolution is governed by the generator \eqref{e3.19}. For instance, we may consider a state in the form:
\begin{equation}\label{input3}
\sigma (0) = \frac{1}{3} \begin{pmatrix}  1 & e^{- i \,\phi_{12}} & e ^{- i \,\phi_{13}}&0 \\ e^{ i \,\phi_{12}} & 1 & e^{ i \,(\phi_{12}-\phi_{13})}&0 \\ e ^{ i\, \phi_{13}} & e^{ i \,(\phi_{13}-\phi_{12})} & 1&0 \\ 0&0&0&0 \end{pmatrix},
\end{equation}
where $\phi_{12}$ denotes the relative phase between the states $\ket{1}$ and $\ket{2}$ (and analogously for $\phi_{13}$). By applying the dynamical map \eqref{e3.21} to the state \eqref{input3}, we can determine the dynamics of the off-diagonal elements:
\begin{equation}
\begin{cases}
\sigma_{21} (t)  = \frac{1}{3} \exp \left( -\frac{1}{4} t + i (\mathcal{E}_1 - \mathcal{E}_2) t + \frac{\sin (6 \omega t) }{24 \omega} \right) e^{ i \phi_{12}} \\
 \\
\sigma_{31} (t) = \frac{1}{3} \exp \left(-\frac{1}{4} t - i (\mathcal{E}_1 + \mathcal{E}_2) t  + \frac{\sin (6 \omega t)}{24 \omega} \right) e^{ i \phi_{13}}\\
\\ 
\sigma_{32} (t) = \frac{1}{3} \exp \left( -\frac{1}{2} t - 2 \mathcal{E}_1 i t + \frac{\sin (6 \omega t)}{12 \omega}  \right) e^{i (\phi_{13}-\phi_{12})}
\end{cases}
\end{equation}
and from $\sigma_{ij} (t) = \overline{\sigma_{ji} (t)}$ we can get the other half. The trajectories can be presented graphically on the complex plane if we assume some arbitrary values of the parameters characterizing the evolution, i.e. $\omega,\mathcal{E}_1,\mathcal{E}_2$. For two exemplary phase factors it is done in \figureref{phase3}.

\begin{figure}[h!]
	\centering
	\begin{tabular}{c}
		\centered{\includegraphics[width=0.95\columnwidth]{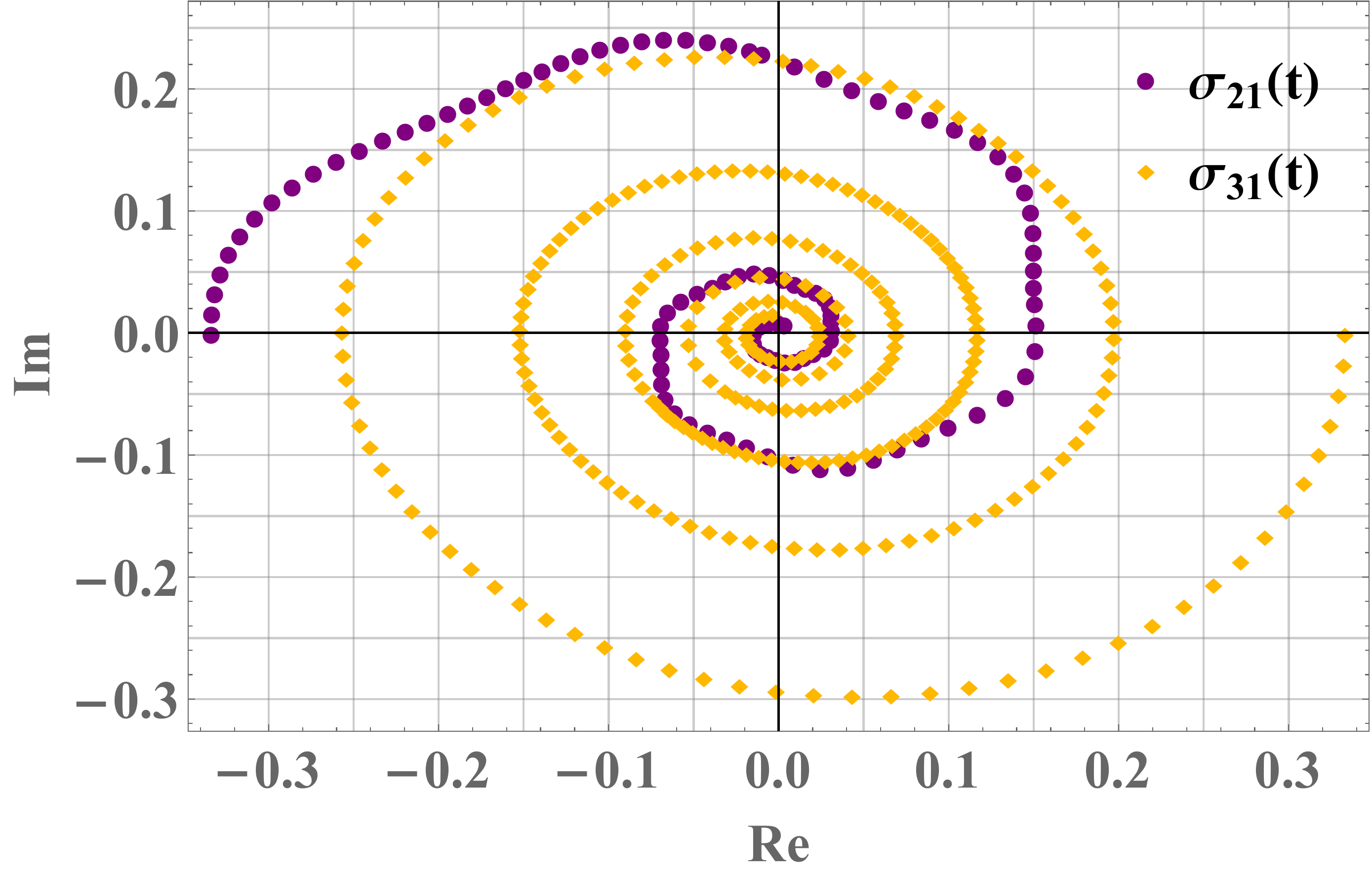}} \\
	\end{tabular}
	\caption{Plot presents the trajectories of $\sigma_{21} (t)$ and $\sigma_{31} (t)$ on the complex plane with the initial values of the relative phases: $ \phi_{12} = \pi$ and $\phi_{13} = 0$.}
\label{phase3}
\end{figure}

\subsection{Discussion and analysis}

The Fedorov theorem provides a useful generalization of the Lappo-Danilevsky criterion. This method was originally introduced by F. I. Fedorov in a 2-pages article in Russian \cite{Fedorov1960} and later included in the book by N. P. Erugin \cite{Erugin1966}. For a long time the theorem was unnoticed in the field of linear differential equations. However, in 2018 it was rediscovered by T. Kamizawa \cite{Kamizawa2018}, who proposed an effective analytical method for studying partial commutativity although with no reference to Physics.

This article contributes to the field of open quantum systems dynamics by demonstrating that the Fedorov theorem can be applied to search for dynamical maps if the corresponding generator depends on time. We considered three particular types of three-level dynamics: $V-$system, \textit{cascade} and \textit{Lambda} along with one example on four-level systems. Such evolution models are commonly studied in laser Physics.

In the case of the $V-$system, it turns out that the generator of evolution \eqref{e3.4} is functionally commutative (even if the relaxation rates are substituted with different time-dependent functions). This allows us to follow the trajectory for any initial state by the closed-form solution. For specific examples, we obtained plots which show how the probabilities of system being in basis states change in time. Interestingly, if one imposes a relative phase factor in the off-diagonal elements of the density matrix, we shall observe phase-damping effects which can be presented by trajectories of the phase factor on the complex plane.

The results for the \textit{cascade} model demonstrate that the Fedorov theorem can be useful but limited at the same time. The closed-form solution can be obtained only if there is zero probability for the initial state to be in the highest energy level. This means that we can study only the dynamics of a reduced, two-level subsystem. In spite of this limitation, one can determine the solution for a spectrum of density matrices and study time characteristics of the corresponding probabilities. The analysis can be further extended by analyzing the dynamics of the purity and the von Neumann entropy. In addition, one can analyze the dynamics of the off-diagonal elements of the density matrix by following the trajectories of phase factors on the complex plane.

Thirdly, in the case of the famous $\textit{Lambda}-$system, the Fedorov theorem allows one to write the solution only for such states which are stationary in terms of the probabilities. The system, given as a statistical mixture of the two lower states, remains unchanged subject to the generator of evolution. However, if we impose non-zero off-diagonal elements of the initial density matrix, we can observe oscillations of the phase factor, which is attributed to the unitary part of the generator.

Finally, an example of four-level systems with \textit{cascade} dynamics was studied. Based on the Fedorov theorem, we could obtain a closed-form solution for three-level subset of initial states. Dynamics of such states can be investigated by following the probabilities, purity, von Neumann entropy, as well as the trajectories of phase factors.

The examples studied in the article show that the applicability of the Fedorov theorem depends on the algebraic structure of the generator $\mathbb{L}(t)$. For some types of dynamics the Fedorov theorem may allow one to obtain a closed-form solution and track the time changes in quantum systems. This problem requires further research. More kinds of time-dependent generators should be tested in connection with the Fedorov theorem. Multi-level quantum systems subject to relaxation (e.g. laser cooling) are an area of intensive research, both theoretical and experimental, e.g. \cite{Bartana1993,Bartana1997,Yang2018}. The Fedorov theorem can provide an effective framework to study dynamics of such systems.

\section{Summary and outlook}

In the article, we have proposed the Fedorov theorem as a technique to solve differential equations which describe the dynamics of open quantum systems. The method was applied to specific types of three-level and four-level systems. The generators studied in the article are in line with evolution models considered within laser Physics. Thus, the results provide valuable insight into the dynamics of relaxation systems. Various characteristics of disspative systems, such as the purity or the von Neumann entropy, can be investigated in the time domain based on the Fedorov theorem.

In the future, the Fedorov theorem shall be applied to other multi-level quantum systems, which may bring significant advancement in understanding the dynamics of dissipative systems composed of atoms interacting with light. When a high-dimensional Hilbert space is concerned, we expect that by partial commutativity one can study closed-form solutions of evolution equations within the admissible subset of initial quantum states. Further research into the Fedorov theorem seems relevant for pure mathematics as well as in the context of physical applications.

\section*{Acknowledgments}

The author acknowledges financial support from the Foundation for Polish Science (FNP) (project First Team co-financed by the European Union under the European Regional Development Fund).

I would like to thank prof. Andrzej Jamiolkowski, who handed to me the original article written by Fedorov. I also thank dr. Takeo Kamizawa for his comments on partial commutativity.


\begin{thebibliography}{99}

\bibitem{Gorini1976}
V.~Gorini, A.~Kossakowski, and E.~Sudarshan, J. Math. Phys. \textbf{17}, 821 (1976).

\bibitem{Lindblad1976}
G.~Lindblad, Comm. Math. Phys. \textbf{48}, 119 (1976).

\bibitem{Manzano2020}
D.~Manzano, AIP Adv. \textbf{10}, 025106 (2020).

\bibitem{Dyson1949}
F.~J.~Dyson, Phys. Rev. \textbf{75}, 1736 (1949).

\bibitem{Alicki2007}
R.~Alicki and K.~Lendi, Quantum Dynamical Semigroups and Applications (Springer, Berlin Heidelberg 2007).

\bibitem{Erugin1966}
N.~P.~Erugin, Linear Systems of Ordinary Differential Equations, with Periodic and Quasi-Periodic Coefficients (Academic Press, New York 1966).

\bibitem{Lukes1982}
D.~Lukes, Differential equations: classical to controlled (Academic Press, New York 1982).

\bibitem{Martin1967}
J.~F.~P.~Martin, SIAM J. Appl. Math. \textbf{15}, 1171 (1967).

\bibitem{Zhu1990}
J.~Zhu and C.~Morales, Lin. Alg. Appl. \textbf{131}, 71 (1990).

\bibitem{Zhu1992}
J.~Zhu and C.~Morales, Lin. Alg. Appl. \textbf{170}, 81 (1992).

\bibitem{Kamizawa2015}
T.~Kamizawa, Open Syst. Inf. Dyn. \textbf{22}, 1550020 (2015).

\bibitem{Bogdanov1959}
Y.~S.~Bogdanov and G.~N.~Chebotarev, Izv. Vyssh. Uchebn. Zaved. Mat. \textbf{4}, 27 (1959).

\bibitem{Epstein1963}
I.~J.~Epstein, Proc. Am. Math. Soc. \textbf{14}, 266 (1963).

\bibitem{Evard1990}
J.-C.~Evard, J.-M.~Gracia, Lin. Alg. Appl. \textbf{137/138}, 363 (1990).

\bibitem{Lappo1957}
J.~A.~Lappo-Danilevsky, Application of Matrix Functions to the Theory of Linear Systems of Ordinary Differential Equations (in Russian) (GITTL, Moscow 1957).

\bibitem{Goff1981}
S.~Goff, Lin. Alg. Appl. \textbf{36}, 33 (1981).

\bibitem{Evard1985}
J.-C.~Evard, Lin. Alg. Appl. \textbf{68}, 145 (1985).

\bibitem{Turcotte2002}
D.~Turcotte, Lin. Mult. Alg. \textbf{50}, 181 (2002).

\bibitem{Maouche2020}
A.~Maouche, Commun. Adv. Math. Sci. \textbf{3}, 9 (2020).

\bibitem{Fedorov1960}
F.~I.~Fedorov, Doklady Akad. Nauk, Belorussian SSR \textbf{4}, 454 (1960).

\bibitem{Hamilton1853}
W.~R.~Hamilton, Lectures on Quaternions (Hodges and Smith, Dublin 1853).

\bibitem{Cayley1858}
A.~Cayley, Philos. Trans. Royal Soc. Lond. \textbf{148}, 17 (1858).

\bibitem{Frobenius1878}
F.~G.~Frobenius, J. Reine Angew Math. \textbf{84}, 1 (1878).

\bibitem{Shemesh1984}
D.~Shemesh, Lin. Alg. Appl. \textbf{62}, 11 (1984).

\bibitem{Jamiolkowski2014}
A.~Jamiolkowski and G.~Pastuszak, Lin. Mult. Alg. \textbf{63}, 314 (2014).

\bibitem{Breuer2004}
H.-P.~Breuer, Phys. Rev. A \textbf{70}, 012106 (2004).

\bibitem{Grigoriu2013}
A.~Grigoriu, H.~Rabitz, and G.~Turinici, J. Math. Chem. \textbf{51}, 1548 (2013).

\bibitem{Breuer2009}
H.-P.~Breuer, E.-M.~Laine, and J.~Piilo, Phys. Rev. Lett. \textbf{103}, 210401 (2009).

\bibitem{Roth1934}
W.~E.~Roth, Bull. Amer. Math. Soc. \textbf{40}, 461 (1934).

\bibitem{Neudecker1969}
H.~Neudecker, SIAM J. Appl. Math. \textbf{17}, 603 (1969).

\bibitem{Hartwig1975}
R.~E.~Hartwig, SIAM J. Appl. Math. \textbf{28}, 154 (1975).

\bibitem{Henderson1981}
H.~V.~Henderson and S.~R.~Searle, Lin. Mult. Alg. \textbf{9}, 271 (1981).

\bibitem{Egger2014}
D.~J.~Egger and F.~K.~Wilhelm, Phys. Rev. A \textbf{90}, 052331 (2014).

\bibitem{Czerwinski2016}
A.~Czerwinski, Int. J. Theor. Phys. \textbf{55}, 658 (2016).

\bibitem{Czerwinski2016a}
A.~Czerwinski, J. Phys. A: Math. Theor. \textbf{49}, 075301 (2016).

\bibitem{Hioe1982}
F.~T.~Hioe and J.~H.~Eberly, Phys. Rev. A \textbf{25}, 2168 (1982).

\bibitem{Rooijakkers1997}
W.~Rooijakkers, W.~Hogervorst, and W.~Vassen,  Phys. Rev. A \textbf{56}, 3083 (1997).

\bibitem{Tannor1999}
D.~J.~Tannor and A.~Bartana, J. Phys. Chem. A \textbf{103}, 10359 (1999).

\bibitem{Sklarz2004}
S.~E.~Sklarz, D.~J.~Tannor, and N.~Khaneja, Phys. Rev. A \textbf{69}, 053408 (2004).

\bibitem{Brion2007}
E.~Brion, L.~H.~Pedersen, and K.~Molmer, J. Phys. A: Math. Theor. \textbf{40}, 1033 (2007).

\bibitem{Zhou2016}
B.~B.~Zhou, A.~Baksic, H.~Ribeiro, C.~G.~Yale, F.~J.~Heremans, P.~C.~Jerger, A.~Auer, G.~Burkard, A.~A.~Clerk, and D.~D.~Awschalom, Nat. Phys. \textbf{13}, 330 (2016).

\bibitem{Parshkov2018}
O.~M.~Parshkov, Quantum Electron. \textbf{48}, 1027 (2018).

\bibitem{Kamizawa2018}
T.~Kamizawa, Far East J. Math. Sci. \textbf{107}, 183 (2018).

\bibitem{Bartana1993}
A.~Bartana, R.~Kosloff, and D.~J.~Tannor, J. Chem. Phys. \textbf{99}, 196 (1993).

\bibitem{Bartana1997}
A.~Bartana, R.~Kosloff, and D.~J.~Tannor, J. Chem. Phys. \textbf{106}, 1435 (1997).

\bibitem{Yang2018}
Q.-S.~Yang, S.-C.~Li, Y.~Yu, and T.~Gao, J. Phys. Chem. A \textbf{122}, 3021 (2018).

\end{thebibliography}
\end{document}